\def\kernel{\textsf{Ker}\xspace}
\def\genpaths{\textsf{genpaths}\xspace}
\def\subspaceclosure{\textsf{subspace\_closure}\xspace}
\def\pathtoQ{\textsf{sub\_paramQ\_bypartition}\xspace}
\def\wavefronttoQ{\textsf{sub\_paramQ\_bywavefront}\xspace}
\def\maxcover{\textsf{combine\_subQ}\xspace}
\def\Ginterf{\mathcal{Q}\xspace}
\def\mayspill{\textsf{may-spill}\xspace}
\def\addparam{\textsf{combine\_paramQ}\xspace}
\def\mainloop{\textsf{program\_Q}\xspace}
\def\dfg{DFG\xspace}
\def\tool{\textsc{IOLB}\xspace}
\def\polybench{\textsc{PolyBench}\xspace}
\title{Automated Derivation of Parametric Data Movement Lower Bounds for Affine Programs}
\begin{document}

\author{Auguste~Olivry}
\affiliation{
	\institution{Univ. Grenoble Alpes, Inria, CNRS, Grenoble INP, LIG, 38000 Grenoble, France}
}
\author{Julien~Langou}
\affiliation{
	\institution{University of Denver Colorado}
}
\author{Louis-No{\"e}l~Pouchet}
\affiliation{
	\institution{Colorado State University}
}

\author{P.~Sadayappan}
\affiliation{
	\institution{University of Utah}
}
\author{Fabrice~Rastello}
\affiliation{
	\institution{Univ. Grenoble Alpes, Inria, CNRS, Grenoble INP, LIG, 38000 Grenoble, France}
}

\authorsaddresses{}

\def\S{S} 
\def\T{T} 
\def\ST{\left(\S+\T\right)}
\def\K{K} 
\def\P{{\mathcal P}\xspace}
\def\q{\textit{q}\xspace}
\def\U{\textit{U}\xspace} 
\def\LB{\textit{L}\xspace} 
\def\IIO{\textit{IIO}\xspace} 
\def\EIO{\textit{EIO}\xspace}
\newcommand*{\EE}{\mathcal{E}}
\def\IO{\textit{I/O}\xspace}
\def\ID{\textit{D\hspace{-0.5em}I\hspace{0.25em}}\xspace}
\def\comp{\textsf{comp}}
\def\frontier{\textsf{input}}
\def\interf{\textsf{interf}}
\def\coeffInterf{\textsf{coeffInterf}}
\def\isbroadcast{\textsf{isBroadcast}}
\def\iscircuit{\textsf{isChain}}
\def\OI{\textit{OI}\xspace}
\def\CI{\textit{CI}\xspace}
\def\OIup{\textit{OI}_{\textrm{up}}}
\def\OIopt{\textit{OI}_{\textrm{manual}}}
\def\OIpluto{\textit{OI}_{\textrm{PLuTo}}}
\def\MB{\textit{MB}\xspace}
\def\peakB{\textit{BW}_\textit{mc}}
\def\peakF{\textit{GFLOPS}_\textit{mc}}
\def\i{\textit{I}\xspace}
\def\I{\mathcal{I}\xspace}
\def\Q{Q}
\def\Qlow{Q_{\textrm{low}}}
\def\Qinfty{Q_{\textrm{low}}^\infty}
\def\Qpluto{Q_{\textrm{PluTo}}}
\def\Qopt{Q_{\textrm{manual}}}
\def\R{\mathbb{R}}
\def\Z{\mathbb{Z}}
\def\load{\textsc{Load}\xspace}
\def\loads{\textsc{Loads}\xspace}
\def\stores{\textsc{Stores}\xspace}
\def\Ker{\textrm{Ker}}
\def\Rwf{R_{\textrm{wf}}}
\newcommand{\In}[1]{\textrm{In}\left(#1\right)}
\newcommand{\card}[1]{\left|#1\right|}
\newcommand{\weight}[2]{\textrm{weight}_{#1}\left(#2\right)}
\newcommand{\proj}[2]{\textrm{proj}_{#1}\left(#2\right)}
\newcommand{\msources}[1]{\mathrm{Sources}\left(#1\right)}
\def\Sources{Sources\xspace}
\def\sources{sources\xspace}
\renewcommand{\dim}[1]{\mathrm{dim}\left(#1\right)}
\def\inset{In-set\xspace}
\def\DFG{Data-flow graph\xspace}
\def\dfg{DFG\xspace}
\def\Lineage{Lineage\xspace}
\def\lineage{lineage\xspace}
\def\lineages{lineages\xspace}
\def\Lineages{Lineages\xspace}
\def\alive{live\xspace}
\def\Alive{Live\xspace}
\def\broken{broken\xspace}
\def\V{\mathcal{S}}
\def\sched{\mathcal{O}}
\def\E{\mathcal{D}}
\newcommand{\yes}{{\color{green}\ding{51}}}
\newcommand{\no}{{\color{red}\ding{55}}}
\newcommand{\soso}{{\color{orange}\ding{51}}}
\newcommand{\closin}[1]{\overline{#1}}
\def\IR{\mathcal{C}}
\def\QQ{\mathcal{Q}}
\def\QI{\mathcal{C}}
\newcommand{\domain}[1]{\mathrm{Dom}\left(#1\right)}
\newcommand{\image}[1]{\mathrm{Im}\left(#1\right)}
\def\B{\mathcal{B}}
\def\L{\mathcal{L}}
\newcommand{\rk}[1]{\mathrm{rank}\left(#1\right)}
\def\latphi{\mathcal{L}_{\phi_1,\phi_2, \dots, \phi_m}}
\newcommand{\ddx}[2]{\frac{\partial #1}{\partial #2}}
\newcommand{\gengrp}[1]{\left\langle #1 \right\rangle}
\def\RR{\mathcal{R}}
\def\RL{R\hspace{-0.2em}L}

\begin{abstract}
  
  For most relevant computation, the energy and time needed for data movement dominates that for performing arithmetic operations on all computing systems today. 
  Hence it is of critical importance to understand the minimal total data movement achievable during the execution of an algorithm. 
  The achieved total data movement for different schedules of an algorithm can vary widely depending on how efficiently the cache is used, e.g., untiled versus effectively tiled matrix-matrix multiplication.
  A significant current challenge is that no existing tool is able to meaningfully quantify 
  the potential reduction to the data movement of a computation
  that can be achieved by more effective use of the cache through operation rescheduling.
  Asymptotic parametric expressions of data movement lower bounds have previously been manually derived for a limited number of algorithms, often without scaling constants.
  In this paper, we present the first compile-time approach for deriving non-asymptotic 
  parametric expressions of data movement lower bounds for arbitrary affine computations.
  
  The approach has been implemented in a fully automatic tool (\tool) that can generate these lower bounds for input affine programs. 
  \tool's use is demonstrated by exercising it on all the benchmarks of the \polybench suite. The advantages of \tool are many: (1) \tool enables us to derive bounds for few dozens of algorithms for which these lower bounds have never been derived. This reflects an increase of productivity by automation. (2) Anyone is able to obtain these lower bounds through \tool, no expertise is required. (3) For some of the most well-studied algorithms, the lower bounds obtained by \tool are higher than any previously reported manually derived lower bounds.
  
\end{abstract}

\maketitle

\section{Introduction}
The cost of performing arithmetic/logic operations on current processors is significantly lower than the cost of moving data from memory to the ALU (arithmetic/logic units), whether measured in terms of latency, throughput, or energy expended. 
While the impact of operations and data movement latency can be effectively masked by issuing a sufficient number of independent operations for pipelined hardware functional units and memory, the maximum throughput of ALUs and memory imposes fundamental limits to achievable performance. Similarly, the minimum volume of data movement imposes fundamental limits to the energy required for a computation.

  The \emph{operational intensity} (\OI), defined as the ratio of the number of arithmetic operations to the volume of data movement to/from memory, is a critical metric for the data movement created by the schedule of an algorithm. 
  
  A convenient way to characterize an algorithm's performance bottlenecks on a particular processor is to compare the \OI of the program to 
  the \emph{machine balance parameter} (\MB), relating peak computational rate to peak data transfer rate to/from memory.
  If a program's achieved \OI is less than \MB, it will be memory-bandwidth limited and therefore constrained to much lower performance than the machine peak.
  Since \MB has been steadily increasing over the years, this progressively changes programs from being compute-bound to becoming memory-bandwidth limited.

The volume of data transfer for a program is not a fixed quantity and could potentially be improved via program transformations that change the schedule of operations in order to efficiently use the \emph{cache}. 
For example, it can be easily shown that for matrix multiplication of two $N \times N$ matrices, any of the six possible loop permutations of the untiled code will necessarily incur movement of at least $N^3$ \emph{words} (data elements) from memory if $N^2 > \S$, the cache capacity. 
This means that the operational intensity for untiled matrix multiplication cannot be higher than $\frac{2N^3}{N^3}=2$ FLOPs per word. 
In contrast, a tiled execution can lower the volume of data moved to $2N^3/\sqrt{\S}$ words, increasing the \OI from $2$ to $\sqrt{\S}$ FLOPs per word.

Performance tools like Intel's Software Development Emulator Toolkit (SDE) and VTune Amplifier (VTune) enable the measurement of the achieved \OI of a program. 
The measured \OI is then typically used to assess the quality of the current implementation. 
For example, if the measured \OI is much lower than \MB, then we expect the performance of the code to be greatly constrained by memory bandwidth. 
However, as illustrated by the example of matrix-multiplication, the achieved \OI of an algorithm can vary significantly across different functionally equivalent implementations.
An important question is: 
{\bf How can we know whether the achieved \OI for a particular implementation of an algorithm is close to the maximum possible for that algorithm or whether significant improvement is potentially feasible?} 
No existing performance tools can help answer this question. 
In this paper, we take a significant step towards addressing this fundamental question, by developing a novel compile-time approach to automatically establish upper bounds on achievable \OI for any input affine program.

In substance, the tool (called \tool) automatically derives parametric lower bounds (with {\em scaling constants}) on the data transfer volume (and thus also provides a parametric upper bound on achievable \OI) for any scheduling of an arbitrary affine code on a two-level memory system.
\tool can be viewed as a proof environment, where the input is an affine code, and the output is an \IO lower bound for this affine code for any valid schedule of operations. 
The formal proof itself can be derived, understood, and reviewed from the output of the \tool algorithm. 
The lower bound is expressed as a function of the parameters of the affine code.
A lower bound is optimal/tight when we can find a (valid) schedule of operations that realizes it.

The lower bound found by \tool is not necessarily tight, but it is always valid.
We have applied \tool on the \polybench test suite~\cite{polybench} made of 30 affine codes.
Our implementation returns non trivial lower bounds including new, never published ones. 
This has been done automatically by the press of a button in less than a second on a basic computer.
In a few cases, the results it provides are better (higher) than previously published lower bounds. 
To assess its quality, we also derived communication-efficient schedules for the \polybench codes.
In 11 cases, the lower bound matches the one realized by the manually optimized schedule, assessing the optimality of the corresponding lower bounds and schedules.

The paper is organized as follows. 
A high-level overview of the approach is presented in Sec.~\ref{sec:highlevel}. 
The formalism for data movement lower bounds based on the seminal red-blue pebble game of Hong \& Kung~\cite{hong.81.stoc}, along with the core definitions and theorems used to derive our algorithm, are described in Sec.~\ref{sec:foundations}. 
Sec.~\ref{sec:decomp} provides insights on how complex programs can be decomposed to derive tighter bounds.
An overview of the complete framework is provided in Sec.~\ref{sec:complete}.
It uses two proof techniques, namely the $K$-partition and the wavefront based proofs that are respectively described in Sec.~\ref{sec:partition} and Sec.~\ref{sec:wf}.
We demonstrate the power of our approach by running it on a full benchmark suite of affine programs: Sec.~\ref{sec:exp} reports the data movement complexities for all benchmarks of the \polybench suite, compared to the data movement cost achieved by an optimizing compiler.
Related work and conclusions can be found in sections~\ref{sec:related} and~\ref{sec:conclusion}.

\section{Overview of Approach}
\label{sec:highlevel}

The developed compile-time analysis tool incorporates two very distinct approaches to finding lower bounds on data movement: 
a first one based on the so-called S-Partitioning approach \cite{hong.81.stoc} and a second one based on graph wavefronts \cite{elango-spaa2014}. 
Before delving into details on the compile-time analysis for automated derivation of data-movement lower bounds for arbitrary affine computations, we present a high-level overview of the first approach, the S-Partitioning approach, in order to familiarize the reader with the goal and terminology of the methods.

\begin{figure}
\captionsetup[subfigure]{justification=centering}
	\begin{minipage}{0.5\textwidth}
  \begin{subfigure}[b]{\textwidth}
    \centering
    \lstset{language=C,basicstyle=\ttfamily,escapeinside={<@}{@>}}
    {\footnotesize\begin{lstlisting}
<@\textbf{Parameters}@>: N, M; 
<@\textbf{Input}@>: A[N], C[M]; <@\textbf{Output}@>: A[N];
for(t=0;t<M;t++)
  for (i=0; i<N; i++) 
    A[i] = A[i] * C[t];
    \end{lstlisting}}
    \caption{C-like code\label{fig:ex-2d-code}}
  \end{subfigure}
  
  \begin{subfigure}[b]{\textwidth}
    \centering
    \lstset{language=C,basicstyle=\ttfamily,escapeinside={<@}{@>}}
    {\footnotesize\begin{lstlisting}
<@\textbf{Parameters}@>: N, M; 
<@\textbf{Input}@>: A[N], C[M]; <@\textbf{Output}@>: <@$S_{M-1}$@>[N];
for (<@$0\le t<M$@> and <@$0\le i<N$@>)
  if (t==0): <@$S_{0,i}$@> = A[i] * C[0];
  else: <@$S_{t,i}$@> = <@$S_{t-1,i}$@> * C[t];
    \end{lstlisting}}
		\caption{Corresponding single assignment form\label{fig:ex-2d-sa}}
	\end{subfigure}
	\end{minipage}
  \begin{subfigure}[c]{0.4\textwidth}
    \includegraphics[width=0.8\textwidth]{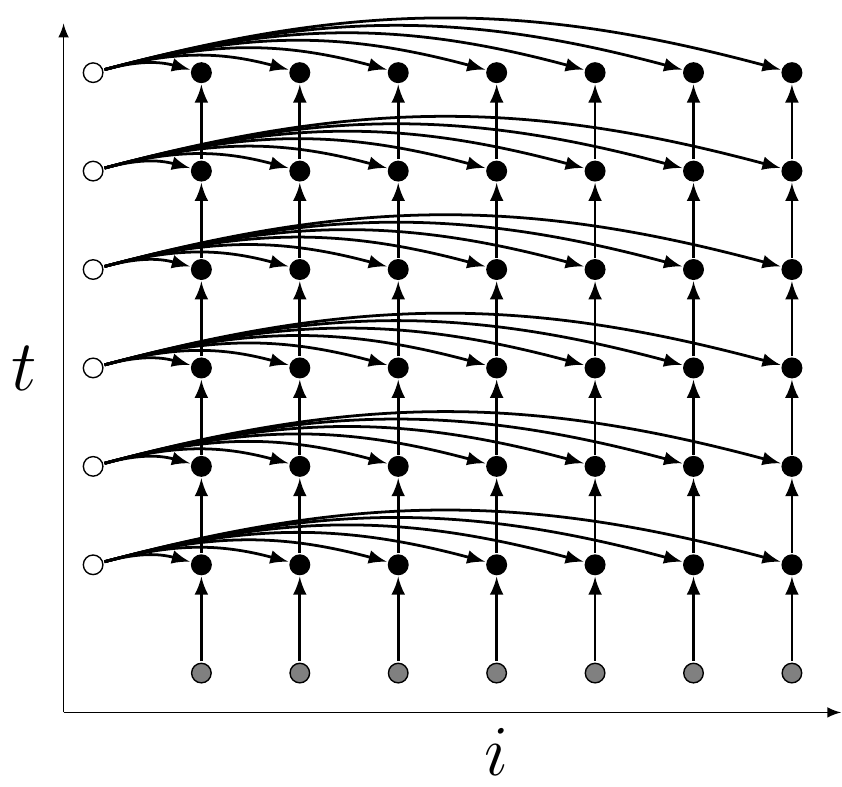}
    \caption{Corresponding CDAG. Input nodes \texttt{A[N]} (resp.~\texttt{C[N]}) are shown in grey (resp.~white) while compute node are shown in black\label{fig:ex-2d-cdag}}
  \end{subfigure}
	\caption{\label{fig:ex-2d}Example~1}
\end{figure}

\paragraph{Elementary example}
Consider the simple program in Fig.~\ref{fig:ex-2d-code}. 
For given values of the parameters M and N (e.g., M=6, N=7), the program can be abstracted as an explicit computational directed acyclic graph (CDAG), as shown in Fig.~\ref{fig:ex-2d-cdag}. 
Vertices in the CDAG represent input values for the computation as well as values computed by all statement instances (the latter are colored black and the former have lighter shades, grey or white). 
Edges in the CDAG capture data flow dependences, i.e., relations between producers of data values to consumers. 
We note that in this abstracted representation of the computation, there is no association of any memory locations with values. 
Fig.~\ref{fig:ex-2d-sa} shows a single-assignment form of the same computation as that in Fig.~\ref{fig:ex-2d-code}, and both programs have the same CDAG shown on Fig.~\ref{fig:ex-2d-cdag}. 
The CDAG abstracts all possible valid schedules of execution of the statement instances: 
the only requirement is that all predecessor vertices in the CDAG must be executed before a given vertex can be executed. 
Data movement is modeled in a simplified two-level memory hierarchy, with an explicitly controlled fast memory of limited size $\S$ (e.g., a set of registers or a scratch-pad), and a slow memory of unlimited capacity. 
At any point in the execution at most $\S$ values corresponding to CDAG vertices may be in fast memory. 
A computational CDAG vertex can be executed only if the values corresponding to all predecessor vertices are present in fast memory.

Consider any valid schedule for the execution of the vertices of a CDAG, expressed as a sequence of instructions: 
load, store, or operation execution (Op). 
A valid schedule must ensure that values corresponding to predecessor vertices are available in fast-memory when the operation corresponding to each CDAG vertex is executed. 
The sequence of instructions of the schedule is partitioned into contiguous maximal sub-sequences such that the total number of load instructions in any sub-sequence (except the last one) is exactly equal to a specified limit $\T$. 
Let us suppose (as explained shortly) that no more than $U$ Ops can be provably present within any of the partitioned sub-sequences. 
Let $V$ denote all computational vertices in the CDAG. 
There must be $\lfloor|V|/U\rfloor$ sub-sequences with $\T$ loads, leading to a lower bound on the number of loads of $Q^{\textrm{low}}=\T\cdot\lfloor|V|/U\rfloor$.

We next use the simple example of Fig.~\ref{fig:ex-2d-cdag} to explain how an upper bound for $U$ can be computed. 
The automated analysis based on partitioning in \tool is centered around the use of geometric inequalities that relate the cardinality of a set of points in a multi-dimensional space to cardinalities of lower-dimensional projections of those points. 
The set of points here ($P$) are the computational vertices (Ops) in one of the partitioned sub-sequences ($SS$) with $\T$ load instructions. 
The \inset $\In{P}$ of $P$ is the set of all predecessors of the vertices in $P$ that do not belong to $P$. 
Clearly, $\In{P}$ represents values that were not computed in the current sub-sequence $SS$ containing $P$. 
Since all values in $\In{P}$ must be in fast memory in order to execute the Ops corresponding to $P$, they must either have already been in fast memory at the beginning of the sub-sequence $SS$ or must have been explicitly loaded within $SS$. 
No more than $S$ values from $\In{P}$ could have been present at the beginning of $SS$, and $\T$ values were loaded in $SS$. 
Thus the size of $\In{P}$ must be less than $\ST$.

In our simple example, vertices corresponding to the loop statement are naturally represented as points in a two-dimensional lattice. 
With that representation, it may be observed that the size of the \inset of a vertex set must be greater than or equal to the cardinality of the orthogonal projections of $P$ onto the vertical and horizontal axes. 
As illustrated on Fig.~\ref{fig:loomis2d}, the size of the vertex set in the two-dimensional space is bounded by the product of the sizes of its two 1D projections. 
This result can be generalized to arbitrary dimensions and any set of projections, and is called the Brascamp-Lieb inequality. 
Setting $\T = \S$, a vertex set with an \inset of size at most $2\S$ cannot have projections of size more than $2\S$, and therefore cannot itself be greater than $U=4S^2$. 
This implies that any valid ordering of the operations for this computation will result in at least $S \cdot \lfloor MN/4S^2\rfloor \approx MN/4S$ load operations\footnote{It is actually possible to improve this bound by a factor of 4 with more advanced techniques, as shown in Sec.~\ref{sec:partition}}.

\paragraph{General approach} To automate and generalize this geometric reasoning on arbitrary affine programs, we need to:
1.~Generalize the geometric upper-bounding for any number of projections with arbitrary dimensionality (Thm.~\ref{thm:bl});
2.~Build (derive from array accesses) a compact representation (DFG) of the data-flow dependencies of the program that is suitable for reasoning about reuse directions (Sec.~\ref{sec:DFG});
3.~Analyze this representation to extract reuse directions (represented as DFG-paths -- Sec.~\ref{ssec:paths}, Alg.~\ref{alg:genpaths});
4.~Generalize the geometric reasoning for a perfectly nested loop with one statement to any combination of loops with arbitrary number of statements (embedding -- Def.~\ref{def:proj}).

The goal of \tool is to go even further and automatically derive parametric bounds that are as tight as possible (including maximization of the scaling constants).
For this purpose, the developed algorithm:
5.~Enables the combination (and tightening) of constraints associated with different projections, even with an arbitrary number of them with lower dimensionality (Lemma~\ref{lemma:optim}, Sec.~\ref{ssec:lb});
6.~Handles non-orthogonal projections even if they are not linearly independent (Lemma~\ref{lm:lattice}, Alg.~\ref{algo:lattice});
7.~Develops a new reasoning strategy inspired from the wavefront reasoning of Elango et al.~\cite{elango-spaa2014} (Sec.~\ref{sec:wf});
8.~Allows the combination of individual complexities of overlapping program regions (Def.~\ref{def:may-spill}, Lemma~\ref{lemma:composition}) even for an unbounded number of regions (parameterized regions inside loops -- Sec.~\ref{ssec:loop});

\section{Foundations}
\label{sec:foundations}

In this section, we present some background and discuss prior results needed for the developments in this paper.

\subsection{CDAG and \IO complexity}

The formalism and methodology we use to derive schedule-independent data movement lower bounds
for execution of an algorithm on a processor with a two-level memory hierarchy 
is strongly inspired by the foundational work of Hong \& Kung \cite{hong.81.stoc}.
In this formalism, an algorithm is abstracted by a graph --- called a CDAG ---, where vertices model
execution instances of arithmetic operations and edges model
data dependencies among the operations. 
The data movement (or \IO) complexity of a CDAG is formalized via the red-white
pebble game (a variation of Hong \& Kung's red-blue pebble game).
In this game, a vertex of a CDAG can hold red and white pebbles.
Red pebbles represent values in the fast memory (typically a cache or scratchpad),
and their total number is limited.
White pebbles represent computed values, that can be loaded into the fast memory.
A value can be computed only when all its operands reside in the fast memory: 
a red pebble can be placed on a vertex in the CDAG if all its predecessors hold a red pebble, a white pebble is placed alongside the red.
Values that have been computed can be loaded in and discarded from the fast memory at any time:
a red pebble can be placed or removed from a vertex holding a white pebble.
However a value can only be computed once: once a vertex holds a white pebble, it cannot be removed.
The \IO cost of an execution of the game is the number of loads into the fast memory: the number of times a red pebble is placed alongside a white one.

Contrary to Hong \& Kung's original model, our formalism \emph{does not allow recomputation} of the value at a vertex. 
This follows many previous efforts~\cite{BDHS11,BDHS11a,bilardi2001characterization,bilardi2012lower,DemmelGHL12,elango-spaa2014,elango-popl2015,toledo.jpdc,savage.cc.95}. 
This assumption is necessary to be able to derive bounds for complex CDAGs by decomposing them into subregions. 
Another slight difference of \tool with prior work is that it only models \emph{loads and not stores} --- this means the generated bounds are clearly also valid lower bounds for a model that counts both loads and stores. 
Since the number of loads dominates stores for most computations, the tightness of the lower bounds is not significantly affected.
We provide formal definitions below.

\begin{definition}[Computational Directed Acyclic Graph (CDAG)]
	A \emph{Computational Directed Acyclic Graph (CDAG)} is a tuple $ G = (V, E, I)$ of finite sets such that 
	$(V, E)$ is a directed acyclic graph, $I \subseteq V$ is called the \emph{input set}
	and every $v \in I$ has no incoming edges.
\end{definition}

\begin{definition}[Red-White Pebble Game]
	Given a CDAG $G = (V, E, I)$, we define a complete $\S$-red-white pebble game ($\S$-RW game for short)
	as follows:
	In the initial state, there is a white pebble on every input vertex $v \in I$, $S$ red pebbles
	and an unlimited number of white pebbles.
	Starting from this state, a complete game is a sequence of steps using the following rules, resulting
	in a final state with white pebbles on every vertex.

	\begin{description}
		\item [(R1)] A red pebble may be placed on any vertex that has a white pebble.
		\item [(R2)] If a vertex $v$ does not have a white pebble and all its immediate predecessors have red pebbles on them,
			a red pebble may be placed on $v$. A white pebble is placed alongside the red pebble.
		\item [(R3)] A red pebble may be removed from any vertex.
	\end{description}

	The \emph{cost} of a $\S$-RW game is the number of applications of rule (R1), corresponding to the number of transfers from slow to fast memory. 
\end{definition}

\begin{definition}[\IO complexity]
	The \emph{\IO} (or  \emph{data movement}) \emph{complexity} of a CDAG $G$ for a fast memory capacity $\S$, denoted $\Q(G)$, is the minimum cost of a complete $\S$-RW game on $G$.
\end{definition}

\subsection{Partitioning}\label{ssec:partitioning}
One key idea from Hong \& Kung was the design of a mapping between
any valid sequence of moves in the red-blue pebble game
and a convex partitioning of the vertices of a CDAG and 
thereby the assertion of an \IO lower bound for any valid schedule
in terms of the minimum possible count of the disjoint vertex-sets in any valid 
$2\S$-partition (see below) of the CDAG.

The argument is the following: any execution can be decomposed into consecutive segments doing exactly (but for the last one) $\S$ loads.
There are at most $\S$ vertices in fast memory before the start of each segment.
Considering the set of computed vertices in one of these segments, we can bound the size of its ``frontier'' by $2\S$:
there can be at most $\S$ vertices in fast memory before the execution of the segment,
and by construction there are exactly $\S$ loads.

Smith et al.~\cite{smith-gemm-19} introduced a generalization of this argument,
leading to tighter bounds in many cases.
The idea is to decompose the execution into segments with $\T$ loads.
This leads to a $\ST$-partitioning lemma instead of the original $2\S$.
We provide formal definitions below.

\begin{definition}[\inset]
	Let $G=(V,E)$ be a DAG, $P\subseteq V$ be a vertex set in $G$.
	The \inset of $P$ is the set of vertices outside $P$ with a successor inside $P$.
	Formally,
		\[\In{P}=\{v\in V\smallsetminus P,\ \exists (v,w)\in E \wedge w\in P\}\]
\end{definition}

\begin{definition}[$\K$-bounded set]
	Let $G = (V, E)$ be a DAG. A vertex set $P \subseteq V$ is called \emph{$\K$-bounded} if $\In{P} \leq \K$.
\end{definition}

\begin{definition}[$\K$-partition of a CDAG]
	Let $G = (V, E, I)$ be a CDAG.
	A $\K$-partition of $G$ is a collection of subsets $V_1, V_2, \dots, V_m$ of $V\setminus I$ such that:
	\begin{description}
		\item [(1)] $\{V_1, V_2, \dots, V_m\}$ is a partition of $V \setminus I$,
			i.e. $\forall i \ne j \ V_i \cap V_j = \emptyset$ and $\bigcup_{i=1}^m V_i = V \setminus I$.
		\item [(2)] There is no cyclic dependence between $V_i$'s.
		\item [(3)] Every $V_i$ is $\K$-bounded, i.e. $\forall i \ \In{V_i} \leq \K$.
	\end{description}
\end{definition}

\begin{lemma}[$\ST$-Partitioning \cite{hong.81.stoc}]\label{lemma:2S-orig}
	Let $\T > 0$. Any complete calculation $\RR$ of the red-white pebble game on a CDAG $G$ using at most $\S$ red pebbles
	is associated with a $\ST$-partition of the CDAG such that
	\[ Q^{\RR} \geq \T \cdot (h - 1),\]
	where $Q^{\RR}$ is the number of applications of rule (R1) in the game and $h$ is the number of subsets in the partition.
\end{lemma}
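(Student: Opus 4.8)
The plan is to read off a partition directly from the given game $\mathcal{R}$ by cutting its move sequence into time-ordered blocks, and then to check the three defining conditions of an $\ST$-partition together with the load count. Concretely, I would scan the sequence of moves and place a cut immediately after every $\T$-th application of rule (R1); this yields contiguous blocks $B_1,\dots,B_h$ in which each block except the last performs exactly $\T$ loads. Because the model forbids recomputation, every non-input vertex acquires its (unique) white pebble through exactly one application of rule (R2), so I would set $V_i$ to be the set of vertices whose (R2) step lies in block $B_i$. Input vertices are never the target of (R2) and, in a complete game, every non-input is computed exactly once, so $\{V_1,\dots,V_h\}$ is a partition of $V\setminus I$, giving condition (1). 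Since the first $h-1$ blocks each contain $\T$ loads, $Q^{\mathcal{R}}\ge \T\,(h-1)$, which is the desired inequality.

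The one invariant I would lean on throughout is that a red pebble can rest on a vertex only while that vertex also carries a white pebble: by (R1) the red pebble is placed where a white one is already present, and by (R2) a white pebble is placed alongside the red one. To get acyclicity (condition (2)), consider any edge $u\to v$ of $G$ with $u\in V_i$, $v\in V_j$ and $i\neq j$. At the instant $v$ is computed, its predecessor $u$ must carry a red pebble, hence (by the invariant) a white pebble, so $u$ was already computed; thus the (R2) step of $u$ precedes that of $v$ and $i<j$. Every cross-block dependence therefore runs from a lower-indexed block to a higher-indexed one, so contracting each $V_i$ produces a directed acyclic graph and no cyclic dependence among the $V_i$ can occur.

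The main obstacle, and the real content of the lemma, is condition (3): showing $\In{V_i}\le\ST$. I would argue that any $w\in\In{V_i}$ carries a red pebble at some moment inside block $B_i$, namely when a successor of $w$ lying in $V_i$ is computed. Since $w\notin V_i$, the vertex $w$ is not computed during $B_i$, so this red pebble is \emph{not} placed by an (R2) step of $B_i$; hence $w$ either already holds a red pebble at the start of $B_i$ or first acquires one via an (R1) step inside $B_i$. Vertices of the first kind number at most $\S$, because at most $\S$ red pebbles exist at any time, and vertices of the second kind number at most $\T$, because $B_i$ performs at most $\T$ loads. Therefore $\In{V_i}\le\ST$, which establishes (3) and completes the construction.

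The only bookkeeping nuisance I anticipate is that a block may contain $\T$ loads but no (R2) step, leaving $V_i$ empty; such empty parts can simply be discarded from the collection, which only decreases the number of parts and so preserves both the in-set bound and the inequality $Q^{\mathcal{R}}\ge\T\,(h-1)$ for the resulting $h$. Nothing in the argument is delicate beyond condition (3), and even there the difficulty is purely in correctly charging each in-set vertex either to the initial fast-memory contents or to a load of the current block.
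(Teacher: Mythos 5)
Your proof is correct and follows essentially the same route as the paper's (which sketches the argument and defers to Hong \& Kung and Smith et al.): cut the execution into segments of exactly $\T$ loads, let each part be the set of vertices computed in a segment, and bound each in-set by the at most $\S$ values resident in fast memory at the segment's start plus the $\T$ values loaded within it. The details you add --- the red-implies-white invariant for acyclicity, the no-recomputation property for disjointness, and discarding empty blocks --- are exactly the routine verifications the cited proof carries out.
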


In particular, an upper bound $U$ on the size of a $\ST$-bounded set directly translates into
a data movement lower bound: $\Q(G) \ge \T \cdot \left(\left\lceil \frac{\card{V\setminus I}}{U}\right\rceil - 1\right)$.
This bound can actually be improved to:
\begin{equation}\label{equation:2S-improved}
\Q(G) \ge \T \cdot \left\lfloor \frac{\card{V\setminus I}}{U}\right\rfloor.
\end{equation}
Indeed, the proof of Lemma~\ref{lemma:2S-orig} establishes a correspondence between the number of sets in a $\ST$-partition
of $G$ and a partition of an execution of the game in segments containing exactly $\S$ loads, except maybe the last one.
The subtraction by one is due to this last segment. When $\card{V\setminus I} / U$ is an integer, the segment contains exactly $\S$ loads and
thus the subtraction is not necessary.

We actually want to be able to compute lower bounds for CDAGs in which no vertices are tagged as input
(this is particularly useful when doing decomposition, see Sec. \ref{sec:decomp})
The following lemma (Lemma~\ref{lemma:2S}) establishes such a result. The main idea is as follows: we tag some vertices
as input, getting a new CDAG on which Lemma~\ref{lemma:2S-orig} applies and gives some bound.
The additional \IO cost is at most the number of input vertices that were added, so we get a lower bound for the original
CDAG by subtracting this number to the bound. See~\cite{elango-spaa2014} for a complete proof.

\begin{definition}[\Sources]
	Let $G=(V,E)$ be a DAG, $P\subseteq V$ be a vertex set in $G$.
	The \sources of $P$ are the vertices of $P$ with no predecessors in $P$.
	Formally,~\\
		\[\msources{P} = \{v \in P,\ \nexists u \in P, (u,v) \in E\}\]
\end{definition}

\begin{lemma}[$\ST$-Partitioning \IO lower bound, no input case~\cite{elango-spaa2014}]\label{lemma:2S}
	Let $\S$ be the capacity of the fast memory, let $G=(V,E,\emptyset)$ be a CDAG, and let $h$ be the minimum number
	of subsets in a $\ST$-partition of $G_I = (V, E, I=\msources{V})$ for some $\T > 0$.
	Then, the minimum \IO for $G$ satisfies:
	\[\Q(G) \geq \T \cdot  (h - 1) -\card{\msources{V}}.\]
\end{lemma}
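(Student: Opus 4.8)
The plan is to reduce to Lemma~\ref{lemma:2S-orig} by tagging the sources of $G$ as inputs and then paying for the extra loads this introduces. Concretely, I would work with $G_I = (V,E,\msources{V})$, to which Lemma~\ref{lemma:2S-orig} applies directly: the partition it associates with any complete game has at least $h$ subsets by definition of $h$, so minimizing over games gives $\Q(G_I)\ge \T\cdot(h-1)$. It then suffices to prove the single inequality
\[\Q(G_I) \le \Q(G) + \card{\msources{V}},\]
because combining the two yields $\Q(G) \ge \Q(G_I) - \card{\msources{V}} \ge \T\cdot(h-1) - \card{\msources{V}}$, which is exactly the claim.

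To establish that inequality I would take an optimal complete $\S$-RW game $\RR$ on $G$ and transform it into a complete game $\RR'$ on $G_I$ of cost at most $\Q(G) + \card{\msources{V}}$. The only structural difference between the two CDAGs is the status of the sources $s \in \msources{V}$: in $G$ they carry no initial pebble and must be produced by rule (R2), which is free since a source has no predecessors, whereas in $G_I$ each such $s$ already carries a white pebble from the start and may never be produced. Since recomputation is forbidden, in $\RR$ every source is activated by exactly one application of (R2); I would let $\RR'$ replay $\RR$ step for step, except that this unique (R2) on each source is replaced by a single (R1) placing a red pebble on the (already white) source at the same time step. Every other move — the (R1)/(R3) pebblings on non-sources, and any later reload or removal of a source — is copied verbatim.

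I would then verify the two facts that make the transformation work. First, \emph{validity}: $\RR'$ is a legal $\S$-RW game on $G_I$, because the substituted (R1) is legal (in $G_I$ the source holds a white pebble), and every (R2) on a non-source still sees exactly the same red pebbles on its predecessors as in $\RR$ — a predecessor that is a source now receives its red pebble from (R1) rather than (R2), but at the identical moment — so the final state still has a white pebble on every vertex. Second, the \emph{cost}: each source contributes exactly one extra (R1) relative to $\RR$ and nothing else changes the count, so $Q^{\RR'} = Q^{\RR} + \card{\msources{V}} = \Q(G) + \card{\msources{V}}$, giving the desired upper bound on $\Q(G_I)$.

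The delicate point — and the part I expect to require the most care — is precisely this bookkeeping on source vertices: one must argue that the \emph{only} discrepancy between a game on $G$ and a game on $G_I$ is the first activation of each source, that this activation occurs exactly once (which is where the no-recomputation assumption is essential), and that turning it from a free (R2) into a paid (R1) neither breaks any later dependence nor alters the load count on any non-source vertex. Everything beyond this accounting is a direct appeal to Lemma~\ref{lemma:2S-orig}.
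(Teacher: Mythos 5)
Your proposal is correct and follows exactly the argument the paper sketches for this lemma: tag the sources of $G$ as inputs to obtain $G_I$, apply Lemma~\ref{lemma:2S-orig} to get $\Q(G_I)\ge \T\cdot(h-1)$, and show that this tagging costs at most $\card{\msources{V}}$ extra loads, i.e.\ $\Q(G_I)\le\Q(G)+\card{\msources{V}}$. Your game transformation (replacing each source's unique, free (R2) activation by a paid (R1), which is legal since sources in $G_I$ carry white pebbles from the start) is precisely the accounting the paper leaves to the cited reference~\cite{elango-spaa2014}, and your handling of it is sound.
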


\subsection{Using projection to bound the cardinality of $\K$-bounded sets}
\label{ssec:projection}

The key idea behind the automation of data movement lower bound computation is the use of geometric inequalities through an appropriate program representation. 
Vertices of a CDAG are mapped to points in a multidimensional geometric space $\mathcal{E} \simeq \Z^d$ through some mapping $\rho$ (where dimensions are typically loop indices), and regular data dependencies in the CDAG are represented as projections on a lower-dimensional space.

The condition ``set of vertices $P \subset V$ is $\K$-bounded'' in the CDAG corresponds to a condition of the form ``the size of the projections of $\rho(P)$ in $\mathcal{E}$ is bounded by $\K$''. 
Finding a bound on the size of a $\K$-bounded set in a CDAG can thus be reduced to: finding a bound on the size of a set $E$ in a geometric space, given cardinality bounds on some of its projections. 
This correspondence is developed in Sec.~\ref{sec:partition}. 
In this section, we take it for granted and only introduce the mathematical notations and results.

There exist inequalities for doing exactly what we need, namely the discrete Brascamp-Lieb inequality,
introduced by Christ et al.~\cite{Demmel2013TR} as a discrete analogue to the one established by Brascamp and Lieb for metric spaces~\cite{brascamp76}.

\begin{theorem}[Brascamp-Lieb inequality, discrete case~\cite{Demmel2013TR}]
	\label{thm:bl}
	Let $d$ and $d_j$ be nonnegative integers and $\phi_j : \Z^d \mapsto \Z^{d_j}$ be group homomorphisms for $1 \leq j \leq m$.
	Let $0 \leq s_1, s_2, \dots, s_m \leq 1$.
	Suppose that:
	\begin{equation} 
		\rk{H} \leq \sum_{j=1}^m s_j\cdot\rk{\phi_j(H)} \text{ for all subgroups $H$ of $\Z^d$}
		\label{eq:bl1}
	\end{equation}
	Then:
	\begin{equation}
		\label{eq:bl2}
		\card{E} \leq \prod_{j=1}^m \card{\phi_j(E)}^{s_j} \text{for all nonempty finite sets $E \subseteq \Z^d$}.
	\end{equation}
\end{theorem}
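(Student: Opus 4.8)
The plan is to pass to the \emph{entropy formulation} of the inequality and then prove it by induction on the dimension $d$. Fix a finite nonempty $E\subseteq\Z^d$ and let $X$ be uniformly distributed on $E$, so that, writing $\mathbb{H}(\cdot)$ for Shannon entropy, $\mathbb{H}(X)=\log\card{E}$. For each $j$ the pushforward $\phi_j(X)$ is supported on $\phi_j(E)$, hence $\mathbb{H}(\phi_j(X))\le\log\card{\phi_j(E)}$. It therefore suffices to establish the \emph{scale-free} entropy inequality
\begin{equation}
  \mathbb{H}(X)\ \le\ \sum_{j=1}^m s_j\,\mathbb{H}\!\left(\phi_j(X)\right)
  \qquad\text{for every $\Z^d$-valued random variable $X$,}
  \label{eq:bl-entropy}
\end{equation}
since exponentiating \eqref{eq:bl-entropy} and inserting the two bounds above (using $s_j\ge0$) yields \eqref{eq:bl2} with constant exactly $1$. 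This route also explains structurally why no geometric constant appears in the discrete statement, in contrast with the Euclidean Brascamp--Lieb inequality: entropy inequalities carry no multiplicative constant by their nature.

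To prove \eqref{eq:bl-entropy} I would induct on $d$, splitting the group along a \emph{critical} subgroup. Call $W\le\Z^d$ critical if it saturates \eqref{eq:bl1}, i.e. $\rk{W}=\sum_j s_j\,\rk{\phi_j(W)}$. Suppose first that some proper, nontrivial critical $W$ exists ($0<\rk{W}<d$). Let $\pi:\Z^d\to\Z^d/W$ be the quotient, let $q_j:\Z^{d_j}\to\Z^{d_j}/\phi_j(W)$ be the target quotients, and let $\bar\phi_j:\Z^d/W\to\Z^{d_j}/\phi_j(W)$ be the induced homomorphisms, so that $q_j\circ\phi_j=\bar\phi_j\circ\pi$. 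The chain rule gives $\mathbb{H}(X)=\mathbb{H}(\pi(X))+\mathbb{H}(X\mid\pi(X))$. I would bound $\mathbb{H}(\pi(X))$ by the induction hypothesis applied to the \emph{quotient datum} $(\bar\phi_j,s_j)$ on $\Z^d/W$, and bound the conditional term by applying, fibrewise over each coset of $W$, the induction hypothesis for the \emph{restricted datum} $(\phi_j|_W,s_j)$ on $W$ (translation-invariance of entropy lets each coset be identified with $W$). Recombining through $\mathbb{H}(\phi_j(X))=\mathbb{H}(\bar\phi_j\pi(X))+\mathbb{H}(\phi_j(X)\mid\bar\phi_j\pi(X))$, together with the fact that conditioning on the finer variable $\pi(X)$ only decreases entropy, recovers exactly $\sum_j s_j\,\mathbb{H}(\phi_j(X))$. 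Both subproblems have strictly smaller rank ($\rk{W}$ and $d-\rk{W}$), so the induction closes; replacing $W$ by its saturation and discarding any torsion in the targets $\Z^{d_j}/\phi_j(W)$ keeps both subproblems in the torsion-free category, since torsion quotients change neither the relevant ranks nor, used in the right direction, the entropy bounds.

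The main obstacle is that the rank hypothesis \eqref{eq:bl1} does \emph{not} descend along an arbitrary slice; this is exactly where criticality is forced. The restricted condition is automatic, since subgroups of $W$ are subgroups of $\Z^d$. For the quotient condition one must verify $\rk{\bar H}\le\sum_j s_j\,\rk{\bar\phi_j(\bar H)}$ for every $\bar H\le\Z^d/W$. Writing $H=\pi^{-1}(\bar H)\supseteq W$ and using additivity of rank on the exact sequences $0\to W\to H\to\bar H\to0$ and $0\to\phi_j(W)\to\phi_j(H)\to\bar\phi_j(\bar H)\to0$, the target inequality becomes $\rk{H}-\rk{W}\le\sum_j s_j\bigl(\rk{\phi_j(H)}-\rk{\phi_j(W)}\bigr)$, which is precisely \eqref{eq:bl1} at $H$ \emph{minus} the critical equality at $W$. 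A noncritical slice would leave an uncancelled surplus $\sum_j s_j\rk{\phi_j(W)}-\rk{W}$: indeed, slicing the coordinate-projection datum on $\Z^2$ (maps $(x,y)\mapsto x$ and $(x,y)\mapsto y$, both with $s_j=1$) along the diagonal $\Z(1,1)$ sends both $\bar\phi_j$ to zero and violates the quotient condition. This is the trap the proof must avoid.

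It remains to treat the \emph{simple} base case, in which the only critical subgroups are $0$ and $\Z^d$; this is the genuine heart of the argument and where $0\le s_j\le1$ and the full strength of \eqref{eq:bl1} come into play. Here the decomposition above is unavailable. Note that a naive transfer from the Euclidean inequality does \emph{not} suffice, because its best constant can exceed $1$, so attaining constant $1$ is a special feature of the torsion-free discrete setting. I would therefore argue directly, by a rigidity analysis of the exponent polytope cut out by \eqref{eq:bl1}, aiming to reduce simple data to essentially coordinate-projection form and thereby to Shearer's entropy inequality (which has constant $1$). I expect this simple case to be by far the most delicate step; the inductive machinery above is comparatively routine once criticality is used to guarantee the descent of \eqref{eq:bl1}.
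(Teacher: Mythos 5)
Your overall architecture is sound, but note first that the paper contains no proof of Theorem~\ref{thm:bl} at all: it is imported verbatim from Christ et al.~\cite{Demmel2013TR} (with the polytope analysis in~\cite{christ2015holder}), so the only meaningful comparison is against that cited source. Your proposal reproduces its superstructure faithfully: the passage to the entropy form $\mathbb{H}(X)\le\sum_j s_j\,\mathbb{H}(\phi_j(X))$ with $X$ uniform on $E$, and the induction that splits along a critical subgroup $W$, using the chain rule across $\pi:\Z^d\to\Z^d/W$ and the observation that criticality is exactly what lets (\ref{eq:bl1}) descend to both the restricted datum on $W$ and the quotient datum on $\Z^d/W$. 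That inductive step is correct as you present it, including the two points usually fumbled: saturating $W$ so the quotient stays free, and invoking data processing in the right direction when torsion is discarded from the targets $\Z^{d_j}/\phi_j(W)$.

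The genuine gap is the base case, which you explicitly leave open --- and the plan you sketch for it would fail. Take $d=2$, $\phi_1(x,y)=x$, $\phi_2(x,y)=y$, $\phi_3(x,y)=x+y$, $s_1=s_2=s_3=2/3$. Every rank-one subgroup $H$ has at most one $\phi_j$ vanishing on it, so the right side of (\ref{eq:bl1}) is at least $4/3>1$, and the full group gives equality $2=2$: the datum is simple, so your decomposition is unavailable. Yet no $GL_2(\Z)$ change of basis turns all three maps into coordinate projections --- in any basis at most two of the three distinct kernels $\gengrp{(0,1)}$, $\gengrp{(1,0)}$, $\gengrp{(1,-1)}$ can be coordinate lines --- so ``reduce to coordinate-projection form and invoke Shearer'' does not apply. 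What actually closes the base case in the cited work is combinatorial rather than rigidity-theoretic: by H\"older interpolation it suffices to prove (\ref{eq:bl2}) at extreme points of the polytope of admissible $s$; at a $0/1$ vertex the inequality follows from injectivity of $x\mapsto(\phi_j(x))_{j:\,s_j=1}$, which (\ref{eq:bl1}) applied to $H=\bigcap_{s_j=1}\Ker(\phi_j)$ yields for free in the torsion-free setting; and non-integral extreme points force enough active subgroup constraints to reactivate the critical-subgroup induction. In the example above, $(2/3,2/3,2/3)$ is the average of $(1,1,0)$, $(1,0,1)$, $(0,1,1)$, at each of which the corresponding pair of maps is injective, and the geometric mean of the three resulting cardinality inequalities gives the bound --- a mechanism entirely absent from your sketch. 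As written, your proposal is an honest reduction of the theorem to its hardest case, not a proof of it.
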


A special case is when the $\phi_j$ are the canonical projections on $\Z^{d-1}$
along basis vectors, and $s_1 = \dots = s_m = \frac{1}{d-1}$,
giving a bound of the form $\card{E} \le \prod_{i=1}^d \card{\phi_j(E)}^{1/(d-1)}$.
Fig.~\ref{fig:loomis2d} illustrates this special case in two dimensions.

The issue, when trying to apply Theorem~\ref{thm:bl}, is that $(\ref{eq:bl1})$ has to be true for \emph{all} subgroups, which can obviously be quite
difficult to establish.
However, as all the coefficients are integers and bounded by $d$, the number of distinct inequalities in (\ref{eq:bl1}) is bounded.
	The set of admissible $s_j$ is thus a (convex) polyhedron
and it has been shown~\cite{christ2015holder} that the polyhedron defined by these inequalities is computable.
The algorithm is actually combinatorial and quite complex, so we do not use it in our present work.
	Instead, we use the following result, which restricts the set of subgroup for which $(\ref{eq:bl1})$ has to be verified.

        \begin{definition}[Lattice of subgroups]
          \label{def:lattice}
	The \emph{lattice of subgroups} generated by subgroups $H_1, H_2, \dots, H_m$ of a group $G$ is the closure of $\{H_1, H_2, \dots, H_m\}$ under group sum and intersection.
\end{definition}

\begin{lemma}[Lattice of subgroups in Brascamp-Lieb\cite{valdi}]
  \label{lm:lattice}
	Theorem~\ref{thm:bl} holds with the weaker condition:
	\begin{equation} 
		\rk{H} \leq \sum_{j=1}^m s_j\cdot\rk{\phi_j(H)} \text{ for all subgroups $H \in \latphi$} \tag{\ref{eq:bl1}b}
		\label{eq:bl1b}
	\end{equation}
	where $\latphi$ is the lattice of subgroups generated by $\Ker (\phi_1), \Ker(\phi_2), \dots, \Ker(\phi_m)$.
\end{lemma}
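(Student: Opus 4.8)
The plan is to show that the restricted condition (\ref{eq:bl1b}) already implies the full condition (\ref{eq:bl1}); the lemma then follows immediately by invoking Theorem~\ref{thm:bl}. Since $\latphi$ is a subfamily of all subgroups we trivially have (\ref{eq:bl1})~$\Rightarrow$~(\ref{eq:bl1b}), so the entire content is the converse. First I would linearize the problem: tensoring with $\mathbb{Q}$ sends each subgroup $H\subseteq\Z^d$ to a subspace $\bar H=H\otimes\mathbb{Q}\subseteq\mathbb{Q}^d$ with $\rk{H}=\dim{\bar H}$, and a common-denominator argument gives $\rk{H\cap K_j}=\dim{\bar H\cap\bar K_j}$, where $K_j=\Ker(\phi_j)$. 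It then suffices to reason in the finite, modular lattice of subspaces of $\mathbb{Q}^d$, with $\latphi$ corresponding to the sublattice generated by $\bar K_1,\dots,\bar K_m$ under $+$ and $\cap$ --- understood to contain the trivial subspaces $\{0\}$ and $\mathbb{Q}^d$ (the empty join and meet), which is needed in order to retain, e.g., the global scaling constraint coming from $H=\Z^d$.

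Using rank--nullity for $\phi_j$ restricted to $H$, namely $\rk{\phi_j(H)}=\rk{H}-\rk{H\cap K_j}$, I would set $f(H)=\rk{H}-\sum_j s_j\,\rk{\phi_j(H)}$, so that (\ref{eq:bl1}) reads $f(H)\le 0$ for all $H$ while (\ref{eq:bl1b}) reads $f(L)\le 0$ for all $L\in\latphi$. The goal becomes: $f\le 0$ on $\latphi$ forces $f\le 0$ everywhere, i.e.\ the maximum of $f$ is attained on $\latphi$. The structural engine is supermodularity: from modularity of the subspace lattice one checks that $H\mapsto\rk{H\cap K_j}$ is supermodular, hence each $\rk{\phi_j(\cdot)}$ is submodular and, since all $s_j\ge 0$, $f$ is supermodular, $f(A+B)+f(A\cap B)\ge f(A)+f(B)$.

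Two consequences drive the argument. First, if $f(A)=f(B)=\max f$ then supermodularity forces $f(A+B)=f(A\cap B)=\max f$, so the set $\mathcal M$ of maximizers is a sublattice. Second, the kernel-aligning operator $\Phi(H)=\bigcap_j(H+K_j)$ satisfies $\Phi(H)\supseteq H$ and $\phi_j(\Phi(H))\subseteq\phi_j(H+K_j)=\phi_j(H)$ for every $j$; hence $\dim{\Phi(H)}\ge\dim{H}$ and $\rk{\phi_j(\Phi(H))}\le\rk{\phi_j(H)}$, giving $f(\Phi(H))\ge f(H)$ and showing that $\mathcal M$ is stable under $\Phi$.

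The main obstacle is the final migration of a maximizer into $\latphi$. Iterating the extensive operator $\Phi$ inside $\mathcal M$ reaches a fixed point $H^\star=\bigcap_j(H^\star+K_j)$, but such fixed points need not be joins and meets of the $K_j$ --- already for three coordinate axes the main diagonal is $\Phi$-stable yet lies outside the lattice --- so $\Phi$ alone does not land in $\latphi$. The heart of the proof, carried out in~\cite{valdi}, is to combine the sublattice structure of $\mathcal M$ with the kernel operations to exhibit an actual element of $\latphi$ that is a maximizer (equivalently, for every $H$ a lattice element $L$ with $\dim{L}\ge\dim{H}$ and $\rk{\phi_j(L)}\le\rk{\phi_j(H)}$ for all $j$), together with a termination argument. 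Granting this, $\max f=\max_{L\in\latphi}f(L)\le 0$, so (\ref{eq:bl1}) holds for every subgroup and Theorem~\ref{thm:bl} then yields~(\ref{eq:bl2}).
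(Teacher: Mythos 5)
The paper never actually proves this lemma --- it is imported directly from Valdimarsson~\cite{valdi} --- and your proposal rests on exactly the same citation, deferring to it the step you yourself single out as the heart of the matter (exhibiting a maximizer of $f(H)=\rk{H}-\sum_{j} s_j\,\rk{\phi_j(H)}$ inside $\latphi$), so in approach there is nothing separating the two. The scaffolding you add is correct and goes beyond what the paper records: the common-denominator reduction from subgroups of $\Z^d$ to subspaces of $\mathbb{Q}^d$, the supermodularity of $f$ and the resulting sublattice of maximizers, the operator $\Phi(H)=\bigcap_j\left(H+\Ker(\phi_j)\right)$ with the diagonal example showing it cannot conclude on its own, and especially your stipulation that $\latphi$ be understood to contain $\{0\}$ and the whole space --- a genuinely needed correction to the paper's Definition~\ref{def:lattice}, since without the whole space the lemma as literally stated fails in degenerate cases (if every $\phi_j$ is injective the generated lattice is $\{\{0\}\}$, condition~(\ref{eq:bl1b}) becomes vacuous and admits $s_1=\dots=s_m=0$, yet conclusion~(\ref{eq:bl2}) is then false).
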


The subgroup lattice is not necessarily finite, so this does not give a tractable algorithm,
but this will be sufficient in most cases, as loop nests are usually of quite small dimensions, and data dependencies
are not too complex. In our practical implementation, we use
a time-out. We add projections: the more projections the tighter the 
bound. Each time we add a projection we update the lattice of subgroups and check for Conditions~\ref{eq:bl1b} to be satisfied. The process (adding projections) stops if we reach the time-out. This does not mean that the algorithm fails, rather that the bound will potentially be less tight (cf. Sec.~\ref{sec:complete}).

A situation that often occurs is when the subgroups $\Ker(\phi_j)$ are linearly independent: in this case there is no need to compute the generated lattice of subgroups,
simply testing on each kernel individually for Conditions~\ref{eq:bl1} is sufficient (see proof in \cite{Demmel2013TR}, Sec.~6.3).

\paragraph{Choosing $s_j$'s}
The goal is to solve the following problem:
``\emph{Given a set of projections (group homomorphisms in $\Z^d$) $\phi_1,\dots,\phi_m$ and a constant $\K$, find an upper bound (as tight as possible)
  on the cardinality of a set $E \subset \Z^d$ satisfying $\card{\phi_j(E)} \leq \K$.}''

For any coefficients $s_1,\dots,s_j$ satisfying $(\ref{eq:bl1})$, Theorem~\ref{thm:bl}
gives the following bound on $\card{E}$:
\[\card{E} \le \prod_{j=1}^m\card{\phi_j(E)}^{s_j}\le \prod_{j=1}^m \K^{s_j} = \K^{\sum_j s_j}.\]

To get a bound as tight as possible on $\card{E}$, we want to minimize the right-hand side of this inequality.
This amounts to minimizing $\sum_j s_j$ while satisfying the constraints in $(\ref{eq:bl1})$.
Since this constraints are linear inequalities, the optimal choice for $s_j$'s can be obtained by a linear solver.

In the special case where the $\phi_j$'s are orthogonal projections along basis vectors (and $m=d$), 
kernels are linearly independent and the linear program is:
\begin{align*}
	\text{Minimize\ \  } \sum_j s_j \text{\ \ \ \ \ \ \  s.t.\ \  }	\forall 1\le i\le d,\ 1 \le \sum_{j \ne i} s_j
\end{align*}
and its solution is, as expected, $s_1 = \dots = s_d = \frac{1}{d-1}$.

\subsection{A compact representation of the CDAG: the \DFG}
 \label{sec:DFG}

\begin{figure}[h]
\captionsetup[subfigure]{justification=centering}
 \centering
 \hfill
	\begin{minipage}{0.39\textwidth}
 \begin{subfigure}{\textwidth}
    \footnotesize
    \lstset{language=C,basicstyle=\ttfamily,escapeinside={<@}{@>}}
    {\begin{lstlisting}
for (<@$0\le t<M$@> and <@$0\le i<N$@>)
  if (t==0): S[0,i]=A[i]*C[0];
  else: S[t,i]=S[t-1,i]*C[t];
    \end{lstlisting}}
		\caption{Single assignment form}
 \end{subfigure}
 \centering
	\begin{subfigure}{\textwidth}
		\centering
		\includegraphics[width=0.5\textwidth]{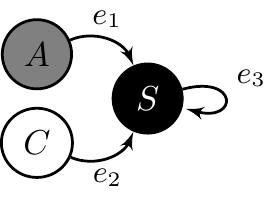}
		\caption{DFG\label{fig:ex-2d-dfg}}
	\end{subfigure}
        \end{minipage}
	\hfill
	\begin{minipage}{0.6\textwidth}
 \begin{subfigure}[t]{\textwidth}
 \footnotesize
   $D_{A}=[N]\rightarrow\left\{A[i]:\ \ 0\le i<N\right\}$ \\
   $D_{C}=[N]\rightarrow\left\{C[t]:\ \ 0\le t<M\right\}$ \\
   $D_{S}=[M,N]\rightarrow\left\{S[t,i]:\ \ 0\le t<M\ \ \wedge\ \ 0\le i<N\right\}$ \\
   $\card{D_{S}}=MN$
   \caption{Node domains}
	 \label{fig:ex-2d-domains}
 \end{subfigure}

  \vspace{.5cm}

 \begin{subfigure}[t]{\textwidth}
 \footnotesize
   $R_{e_1}=[N]\rightarrow\left\{A[i]\rightarrow S[0,i]:\ \ 1\le i<N\right\}$ \\
   $R_{e_2}=[M,N]\rightarrow\left\{C[t]\rightarrow S[t,i]:\ \ 0\le t<M\ \ \wedge\ \ 0 \le i < N\right\}$ \\
   $R_{e_3}=[M,N]\rightarrow\left\{S[t,i]\rightarrow S[t+1,i]:\ \ 0\le t<M-1\ \ \wedge\ \ 0\le i<N\right\}$\\
   \caption{Edge relations}
	 \label{fig:ex-2d-relations}
 \end{subfigure}
	\end{minipage}
        	\hfill

 \caption{\dfg for Example~1}
 \label{fig:ex-2d-2}
\end{figure}

	A CDAG (see Fig.~\ref{fig:ex-2d-cdag}) represents a single dynamic execution of a program, and can be very large.
	To be able to analyze programs of realistic size with reasonable resources, we
	use a compressed representation called \DFG (\dfg).
	Another advantage of such a representation is that it is \emph{parametric},
	i.e.~a single \dfg can represent CDAGs of different sizes, depending on program parameters.
	A \dfg represents an \emph{affine} computation, which is the class of
	programs that can be handled by the \emph{polyhedral model}~\cite{polyhedron}.
	We use the terminology and syntax from the ISL library~\cite{verdoolaege2010isl},
	and illustrate them with the example of Fig.~\ref{fig:ex-2d}.
        Formal definitions can be found in the manual~\cite{isl-manual}.
        
        \paragraph{Vertex domains} As one can see on Fig.~\ref{fig:ex-2d-cdag}, to each loop is associated a ``geometric'' space dimension ($t$ and $i$ here) so that each \emph{vertex of the CDAG} lives in a multidimensional iteration space, its \emph{domain}, that can be algebraically represented as a union of parametric $\Z$-polyhedra bounded by affine inequalities.

	A domain is an \emph{ISL set} for which standard operations (union, intersection, difference,\dots) are available, as well as a \emph{cardinality} operation (denoted $\card{D}$). 
        As an example (see Fig.~\ref{fig:ex-2d-domains}), the \emph{domain} $D_{S}$ of statement $S$ is a $\Z$-polyhedron with parameters $M$ and $N$ made up of all integer points $(t,i)$ such that $0\le t<M$ and $0\le i<N$.
        The number of points in this set (cardinality) is $\card{D_{S}}=MN$.
        Note that the space within which all the points of a statement ($S$ here) live is identified with the name of the statement, using the notation $S[t,i]$.
        
        \paragraph{Edge relations} A set of \emph{edges} of the CDAG is represented using a \emph{relation} (ISL map), which is a set of pairs between two spaces, from the \emph{domain} space to the \emph{image} space.
        As an example (see Fig.~\ref{fig:ex-2d-relations}), the data flow from statement $S[t,i]$ (definition of $A[i]$ in $S$) to statement $S[t+1,i]$ (use of $A[i]$ in $S$) is represented using the relation $R_{e_3}$.

	In addition to standard set operations, ISL can compute the transitive closure of a relation,
	denoted $R^*$.
	Binary relations are also supported: image of a domain $D$ through a relation $R$ (denoted $R(D)$),
	and composition of two relations $R_1$ and $R_2$, denoted $R_1\circ R_2$ (this is left composition, going the opposite way from usual functional notation). 
	Composition restricts the image domain of the resulting relation to points where the composition relation makes sense:
	$\domain{R_1\circ R_2} = R_1^{-1}\left(\image{R_1}\cap\domain{R_2}\right)$,
	$\image{R_1\circ R_2} = R_2\left(\image{R_1}\cap\domain{R_2}\right)$.
	As with domains, we will sometimes manipulate unions of such relations.

\paragraph{A \DFG (\dfg)} A \dfg is a graph $G = (\V,\E)$.
Each vertex $S\in\V$ of the graph represents a (static) statement or an input array of the
program.
Each vertex $S$ is associated with a parametric iteration domain $D_S$ and a list of enclosing loops (empty for input arrays).
Each edge $d = (S_a, S_b)\in\E$ represents a flow dependency between statements or input
arrays.
Each edge is associated with an affine relation $R_d$ between the
coordinates of the source and sink vertices.
The \dfg is a compact (exact) representation of the dynamic CDAG where a single vertex/edge of the \dfg represents several vertices/edges of the dynamic CDAG.
While all the reasoning and proofs can be done by visualizing a CDAG, the actual heuristic described in this paper manipulates its compact representation, allowing to translate graph methods~\cite{elango-popl2015} into geometric reasoning.
Fig.~\ref{fig:ex-2d-dfg} shows the \dfg for our simple stencil code.

\paragraph{\dfg-paths}
A fundamental object in our lower bound analysis is a \dfg-path, which is simply a directed path in a \dfg.
The relation $R_p$ of a \dfg-path $p = (e_1,\dots,e_k)$ is the composition of the relations of
its edges: $R_p = R_{e_1}\circ \dots\circ R_{e_k}$.
We are only interested in two specific types of \dfg-paths, depending on their relation:
\begin{itemize}
\item \emph{chain circuits}, which are cycles from one \dfg-vertex $S$ to itself, such that the path relation $R_p$ is a translation $S[\vec{x}]\rightarrow S[\vec{x} + \vec{b}]$.
\item \emph{broadcast $S_a,S_b$-paths}, which are elementary paths (from a $S_a$ to $S_b$ -- $S_b$ possibly equal to $S_a$) in which all \dfg-edges but the first one are injective edges, such that the inverse of the corresponding relation $R_p$ is an affine function $S_b[\vec{x}]\rightarrow S_a[A\cdot\vec{x} + \vec{b}]$, where $A$ is not full-rank.
\end{itemize}

In Fig.~\ref{fig:ex-2d-2}, path $p = (e_3)$ is a chain circuit, going from $S$ to itself with translation vector $\vec{b} = (1, 0)$. 
Path $p' = (e_2)$ is a broadcast path, with relation $R_{p'} = R_{e_2} = \{C[t]\rightarrow S[t,i]:\ \ 0\le t<M\ \ \wedge\ \ 0 \le i < N\}$. 
The inverse relation is the linear function $\vec{I} \mapsto A \cdot \vec{I} + \vec{b}$, with $A = (1~~0), \vec{b} = (0)$. 
The kernel of $A$ is $\{(0, i), i \in \R\}$.

\section{CDAG Decomposition}
\label{sec:decomp}

To derive data movement lower bounds for a complex program, it is essential to be able to decompose it into subregions for which we can compute lower bounds, and then sum the complexity for each subregion. 
The \emph{no recomputation} condition is necessary for such a decomposition. 
Under this hypothesis, it is quite straightforward to see that a decomposition into disjoint subregions is sufficient. 
In this section, we provide a more general decomposition lemma, using the fact that vertices of a subregion that will not be counted as loads can also be part of another subregion. 
We then explain how it is applied on the \dfg representation, distinguishing two cases: 
combining a fixed number of program regions (see example in Fig.~\ref{fig:fw-2d}); 
and summing over all iterations of a loop (see example in Fig.~\ref{fig:re1d}), which amounts to combining an unbounded (parametric) number of program regions. 
We stress that the CDAG partitioning method (sections~\ref{ssec:partitioning} and \ref{ssec:projection}) and the CDAG decomposition method (this section) are very much different and not related.

\subsection{Non-disjoint Decomposition Lemma}
\begin{definition}[sub-CDAG, no-spill set]
  \label{def:may-spill}
	Let $G = (V, E, I)$ be a CDAG, and $V_i \subset V$.
	The sub-CDAG $G_{|V_i}$ of $G$ is the CDAG with vertices $V_i$, edges $E_i = E\cap (V_i \times V_i)$ and input vertices $I_i = I \cap V_i$.

	The \emph{no-spill set} of $G_{|V_i}$ is the subset of vertices of $V_i\setminus I_i$ with either:
	\begin{enumerate}
		\item no outgoing edges in $E_i$, or 
		\item no incoming edges in $E_i$ and at most one outgoing edge in $E_i$
	\end{enumerate}

	The \emph{may-spill set} of $G_{|V_i}$ is the complementary of its no-spill set in $V_i$.
\end{definition}

\begin{lemma}[CDAG decomposition] \label{lemma:composition}
	Let $G = (V, E, I)$ be a CDAG.
	Let $V_1,V_2,\dots,V_k$ be subsets of $V$ such that for any $i\neq j$, the may-spill sets of $G_{|V_i}$ and $G_{|V_j}$ are disjoint.
	Then
	\[\Q(G) \ge \sum_{i=1}^k \Q(G_{|V_i}).\]
\end{lemma}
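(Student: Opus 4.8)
The plan is to take a cost-optimal complete $\S$-RW game $\RR$ on $G$, so that $Q^{\RR} = \Q(G)$, and to manufacture from it, for each region, a valid complete game on the sub-CDAG $G_{|V_i}$ whose number of loads is at most the number of (R1) applications of $\RR$ landing on the may-spill set of $G_{|V_i}$. Each such game witnesses $\Q(G_{|V_i}) \le (\text{loads of } \RR \text{ on may-spill}(G_{|V_i}))$. Because the may-spill sets are pairwise disjoint by hypothesis, every load of $\RR$ lies in the may-spill set of at most one region, so these per-region counts sum to at most the total number of loads of $\RR$, namely $Q^{\RR} = \Q(G)$. Chaining the two inequalities yields $\sum_i \Q(G_{|V_i}) \le \Q(G)$.

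The first step is the restriction. For fixed $i$, let $\RR_i$ be the subsequence of moves of $\RR$ acting on a vertex of $V_i$; I claim this is a valid complete game on $G_{|V_i}$. The initial white pebbles of $\RR$ restrict exactly to $I_i = I\cap V_i$, and the red-pebble budget is respected since at every instant the red pebbles sitting on $V_i$ in $\RR_i$ mirror those of $\RR$ and hence number at most $\S$. The only point needing care is the legality of each (R2) on a vertex $v\in V_i$: in $\RR$ all $E$-predecessors of $v$ carried red pebbles, and the $E_i$-predecessors form a subset lying in $V_i$, so they carry red pebbles in the mirrored state of $\RR_i$; a vertex of $V_i$ whose predecessors all lie outside $V_i$ simply becomes a source of $G_{|V_i}$ and its (R2) is vacuously legal. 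Completeness follows because every non-input vertex of $V_i$ was computed by an (R2) of $\RR$ that survives in $\RR_i$. By construction the number of loads of $\RR_i$ equals the number of loads of $\RR$ landing in $V_i$.

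The core step is a transformation lemma: from any valid complete game $\sigma$ on a CDAG $H$ one can build a valid complete game $\sigma'$ whose (R1) applications are exactly those of $\sigma$ landing on may-spill vertices of $H$. Applied to $\sigma=\RR_i$ and $H=G_{|V_i}$, this gives $\Q(G_{|V_i})\le(\text{loads of }\RR\text{ on may-spill}(G_{|V_i}))$, precisely what the summation needs. For a type-(1) no-spill vertex (a sink of $H$), its red pebble is never a prerequisite of any (R2), so all (R1) and (R3) applications touching it may be deleted outright, which only frees budget while its unique computing (R2) still restores the white pebble. For a type-(2) no-spill vertex $v$ (a non-input source with at most one out-edge), I delete all its (R1), (R3), and its unique (R2), then reinsert that (R2) immediately before the (R2) computing its single successor $w$ (or at the very end if $v$ has none). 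Since $w$ has an incoming edge it is neither input nor type-(2), so its (R2) is not itself moved and the reinsertion is well defined.

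The delicate point — and the step I expect to be the main obstacle — is verifying that these simultaneous reinsertions never violate the $\S$-pebble bound. Stripping the red pebbles off all type-(2) vertices can only lower the red count at every instant. At the moment just before an (R2) on some $w$, the type-(2) predecessors of $w$ all held red pebbles in $\sigma$ and form a subset of the type-(2) vertices stripped at that instant; reinserting exactly their (R2)s therefore restores a configuration of size at most the original, which was at most $\S$. Hence $\sigma'$ is legal and complete, loads only may-spill vertices, and leaves every may-spill load of $\sigma$ untouched, so its cost equals the number of may-spill loads of $\sigma$. Combining the restriction, this transformation, and the disjointness of the may-spill sets completes the argument.
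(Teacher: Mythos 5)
Your overall architecture is the one the paper uses: restrict an optimal game $\RR$ to each $G_{|V_i}$, transform the restricted game so that every surviving (R1) lands in the may-spill set, and sum the per-region costs using disjointness of the may-spill sets. The restriction step and the treatment of type-(1) vertices match the paper's proof. But your type-(2) handling has a genuine gap, exactly at the point you flagged as delicate. After relocating the (R2) of a type-(2) vertex $v$ to just before the (R2) of its successor $w$, you never release the red pebble on $v$: you deleted every (R3) touching $v$ and reinsert none. So red pebbles accumulate on \emph{all} type-(2) vertices processed so far, and your budget argument breaks: at the instant before some (R2) on $w$, the transformed configuration contains not only the freshly reinserted pebbles on $w$'s own type-(2) predecessors, but also the never-removed pebbles from every earlier reinsertion --- and in the original game $\sigma$ those earlier vertices may have been (R3)'d long before, so the transformed configuration is \emph{not} a subset of $\sigma$'s configuration at that instant. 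Concretely, take $\S=2$ and $n$ disjoint pairs $v_j \rightarrow w_j$ with each $v_j$ a non-input source and each $w_j$ a sink: a valid $\sigma$ computes pair $j$ and releases both pebbles before moving on, while your transformed game is an uninterrupted sequence of (R2)s that ends up holding $2n$ red pebbles, which is impossible with only $\S$ red pebbles in the supply.

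The repair is precisely the detail the paper's proof includes in its step 3 and your construction drops: immediately after the (R2) on $w$, insert an (R3) removing the red pebble from $v$. With that, at any instant the type-(2) vertices holding red pebbles in the transformed game are only those predecessors of the currently anchored $w$, which also held red pebbles in $\sigma$ at the corresponding instant, so the configuration is genuinely dominated by $\sigma$'s and the $\S$-bound holds; (R3) moves are free, so the load count --- exactly the (R1)s on the may-spill set --- is unchanged, and the rest of your argument (anchoring on $w$, which cannot itself be type-(2) since it has an incoming edge; completeness; summation over regions) goes through as written. A pedantic remark: your type-(1) step, like the paper's step 2, also strands a red pebble on each sink after its (R2) once all (R3)s are deleted; inserting a free (R3) right after each such (R2) repairs this in the same way, so that looseness is shared with the paper rather than specific to your proposal.
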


\begin{proof}
	Let $\RR$ be an optimal $\S$-RW-game on $G$, with cost $Q = \Q(G)$.
	For all $i$, we denote $Q_i$ the cost of $\RR$ restricted to $V_i$, that is the number of applications
	of rule (R1) on vertices in \emph{the may-spill set of} $V_i$.
	Since the may-spill sets are pairwise disjoint, clearly $Q \geq \sum_{i=1}^k Q_i$.
	For all $i$, we will build from $\RR$ a valid game $\RR_i$ for $G_{|V_i}$ with cost $Q_i$.
	This will show that $\Q(G_{|V_i}) \leq Q_i$, from which follows 
	$\sum_{i=1}^k \Q(G_{|V_i}) \le \sum_{i=1}^k Q_i \le Q = \Q(G)$, establishing the result.

	To build the game $\RR_i$ from $\RR$, we proceed as follows:
	\begin{enumerate}
		\item Remove every move involving vertices outside $V_i$.
		\item For no-spill vertices in $V_i$ without successors, remove every application of rule (R1) and (R3).
		\item For no-spill vertices $v$ in $V_i$ with no predecessors and one successor $w$, move the single application
			of rule (R2) just before application of (R2) on $w$, add a (R3) move for $v$ just after this point and remove all subsequent (R1) and (R3) moves on $v$.
	\end{enumerate}
	It is clear that after step 1, we have a valid game for $G_{|V_i}$.
	Indeed conditions for (R1) and (R3) are trivially preserved, and since we kept all (R2) moves on vertices in $V_i$,
	there will always be the necessary red pebbles to apply (R2).
	Step 2 also gives a valid game, since no (R2) moves can depend on such a vertex having a red pebble.
	Step 3 is also a valid transformation because since $v$ does not have any predecessor in $V_i$ (and is not an input vertex),
	it can be activated via (R2) and any given point. Therefore activating it just when it is needed, and applying (R3) just after is valid.

	This preserves all (R1) moves on the may-spill set of $G_{|V_i}$ and removes all (R1) moves on its no-spill set, thus the cost of $\RR_i$ is indeed $Q_i$.
\end{proof}

\begin{figure}[h]
\captionsetup[subfigure]{justification=centering}
 \begin{subfigure}[c]{0.4\textwidth}
   \lstset{language=C,basicstyle=\ttfamily,escapeinside={<@}{@>}}
    {\footnotesize\begin{lstlisting}
  for(t=0; t<M; t++) {
      s = 0;
      for(i=0; i<N; i++)
S1:       s += A[j];
      for(i=0; i<N; i++)
S2:       A[j] += s;
  }
    \end{lstlisting}}
	 \caption{\label{fig:re1d-code}Code}
 \end{subfigure}
 \begin{subfigure}[c]{0.3\textwidth}
	 \includegraphics[width=0.8\textwidth]{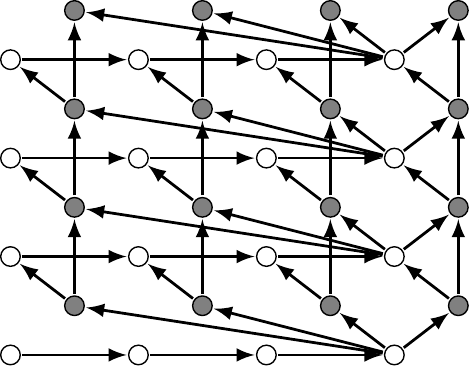}
	 \caption{\label{fig:re1d-cdag}CDAG for M=4, N=4. White vertices correspond to S1, gray vertices to S2.}
 \end{subfigure}

 \begin{subfigure}[c]{0.7\textwidth}
	 \includegraphics[width=\textwidth]{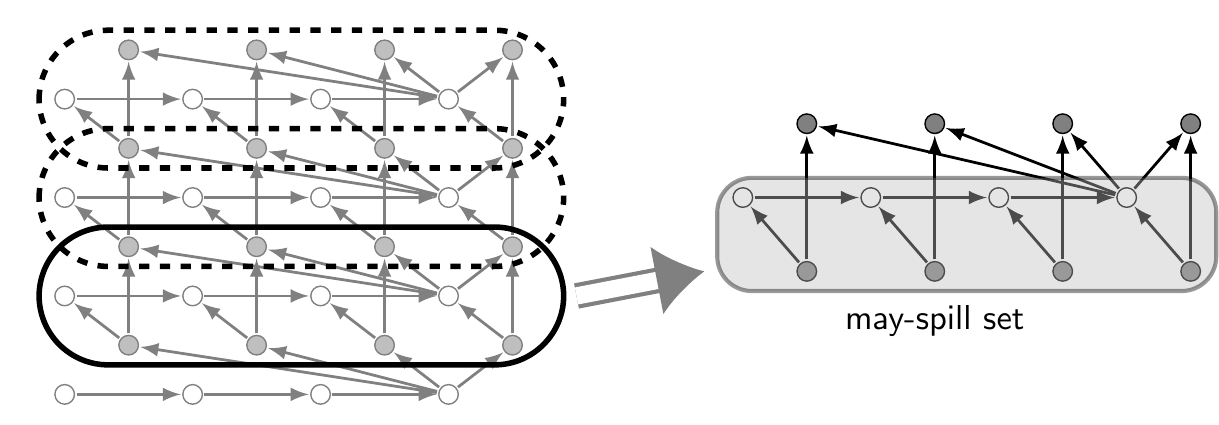}
	 \caption{\label{fig:re1d-decomp}Decomposition of the CDAG}
 \end{subfigure}
	\caption{\label{fig:re1d}Example 2}
\end{figure}

\tool implements two different mechanisms that make use of the non-disjoint decomposition lemma.
The basic one (bounded combination -- Sec.~\ref{ssec:maxcover}) simply decomposes the CDAG into a bounded number of sub-CDAGs (e.g.~corresponding to different sub-regions of the code), computes the corresponding \IO complexities, and combines them.
The more complex one (loop parametrization -- Sec.~\ref{ssec:loop}), decomposes the CDAG into an unbounded number of sub-CDAGs by ``slicing'' the iteration space of a loop nest.
\tool combines the two mechanisms.
The following example illustrates the decomposition lemma for loop parametrization.

\paragraph{Illustrating example}
Consider Example 2 on Fig.~\ref{fig:re1d}. 
The CDAG can be decomposed into $M-1$ identical subgraphs, as shown on Fig.~\ref{fig:re1d-decomp} (each subgraph $G_{|V_t}, t = 1,\dots,M-1$ corresponds to iteration $t$ of the loop enclosing $S_1$, and iterations $t-1$ and $t$ of the loop enclosing $S_2$). 
On each of these subgraphs, the may-spill set contains the two ``bottom'' rows (because vertices in the ``top'' row have no successor in the sub-CDAG). 
Thus the may-spill sets of these subgraphs are pairwise disjoint and the \IO for the whole CDAG is greater than the sum of the individual \IO for each subgraph by Lemma~\ref{lemma:composition}.

On each subgraph $G_{|V_t}$, the wavefront method (Sec.~\ref{sec:wf}) can be applied, giving a lower bound on \IO of $\Q(G_{|V_t}) \ge N - \S$.
As the may-spill set of the different subgraphs do not intersect, the individual complexities can be summed over $t = 1,\dots,M-1$, providing a lower bound for the whole CDAG:
\[ \Q(G) \ge (M-1)(N-\S). \]

\subsection{Bounded combination}
\label{ssec:maxcover}
The main procedure of \tool (Sec.~\ref{ssec:mainloop}) selects a bounded set of (possibly overlapping) sub-CDAGs and computes their individual complexities.
The objective of Alg.~\ref{alg:maxcover} is to combine (sum) as many non-interfering (disjoint may-spill sets) complexities as possible.
It does so using a greedy approach:
Assume there are two sub-CDAGs both with a ``high'' complexity but with non-disjoint may-spill sets.
Alg.~\ref{alg:maxcover} will select the one with the highest complexity, recompute the complexity of the second after removing the intersecting part, and then sum them up.
The overall set of sub-CDAGs is iteratively processed this way (and the complexities summed-up) until empty or negligible complexities remain.
The comparison (what is ``higher'') is done using \emph{instances of parameter values}, simply evaluating the corresponding symbolic expressions.
It should be emphasized that the final bound is a valid lower bound for \emph{any} parameter values, the instances of parameter values are only used for heuristics.
 
\begin{algorithm}[h]
	\small

\Fn{\maxcover}{
  \Input{ A \dfg $G$, an instance $\i$, a set of complexities $\Ginterf$}
  \Output{ A combined complexity $\Q^\i$}
  $\Q^\i=0$\;
  Let $G'$ be a copy of $G$\;
  \While{$\Ginterf\neq \emptyset$}{
    \textbf{let} $Q$ such that $Q(\i)=\max_\Ginterf Q(\i)$\;
    $\Ginterf = \Ginterf - \{Q\}$\;
    \lIf{$Q(\i)=0$}{\textbf{return} $\Q^\i$}
    \uIf{$G'\cap Q.\mayspill\neq \emptyset$}{
      Recompute $Q$ assuming CDAG $G'$\;
      $\Ginterf = \Ginterf \cup \{Q\}$\;
    }
    \Else{
      $\Q^\i:=\Q^\i+Q$\;
      $G'=G'-Q.\mayspill$
    }
  }
  \Return $\Q^\i$
}

        \vspace{\baselineskip}
	\caption{\label{alg:maxcover}Summing lower bound expressions by removing interferences}
\end{algorithm}

\begin{figure}[h]
\captionsetup[subfigure]{justification=centering}
	\begin{minipage}{0.5\textwidth}
  \begin{subfigure}[b]{\textwidth}
    \centering
    \lstset{language=C,basicstyle=\ttfamily,escapeinside={<@}{@>}}
    {\footnotesize\begin{lstlisting}
<@\textbf{Parameters}@>: N; 
<@\textbf{Input}@>: A[N]; <@\textbf{Output}@>: A[N];
for(k=0;k<N;k++)
  for (i=0; i<N; i++) 
    A[i] = f(A[i],A[k]);
    \end{lstlisting}}
    \caption{C-like code\label{fig:fw-2d-code}}
  \end{subfigure}
  
  \begin{subfigure}[b]{\textwidth}
    \centering
    \lstset{language=C,basicstyle=\ttfamily,escapeinside={<@}{@>}}
    {\footnotesize\begin{lstlisting}
<@\textbf{Parameters}@>: N; 
<@\textbf{Input}@>: A[N]; <@\textbf{Output}@>: <@$S_{N-1}$@>[N];
for (<@$0\le k<N$@> and <@$0\le i<N$@>)
  <@\tikzmark{fw1}@>if (k==i==0): <@$S_{0,i}$@> = f(A[0],A[0]);
  else if (k==0): <@$S_{0,i}$@> = f(A[i],<@$S_{0,0}$@>);<@\tikzmark{fw2}@>
  else if (i<=k): <@$S_{k,i}$@> = f(<@$S_{k-1,i}$@>, <@$S_{k-1,k}$@>);
  else if (i>k): <@$S_{k,i}$@> = f(<@$S_{k-1,i}$@>, <@$S_{k,k}$@>);
    \end{lstlisting}}
		\caption{Corresponding single assignment form\label{fig:fw-2d-sa}}
	\end{subfigure}
	\begin{tikzpicture}[overlay, remember picture]
		\node [above left=5pt and 0pt of pic cs:fw1] (fw1){};
		\node [below right=1pt and 2pt of pic cs:fw2] (fw2){};
		\draw [white,fill=white, opacity=0.6] (fw1) rectangle (fw2);
	\end{tikzpicture}
	\end{minipage}
  \begin{subfigure}[c]{0.4\textwidth}
    \includegraphics[width=\textwidth]{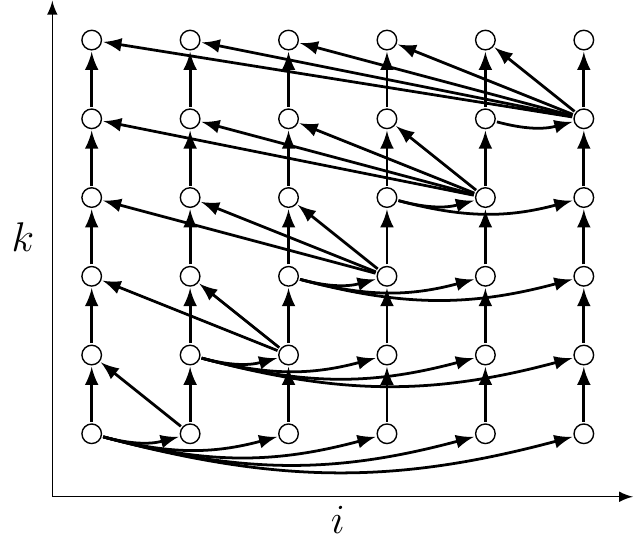}
    \caption{Corresponding CDAG for N=5. Input nodes \texttt{A[N]} are omited.\label{fig:fw-2d-cdag}}
  \end{subfigure}

  \begin{subfigure}[c]{0.7\textwidth}
    \includegraphics[width=0.4\textwidth]{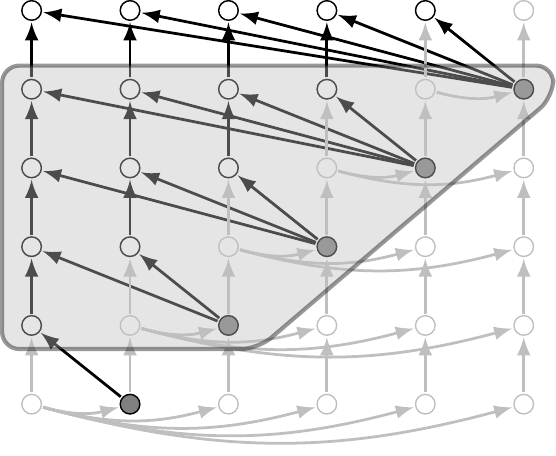}
		\hfill
    \includegraphics[width=0.4\textwidth]{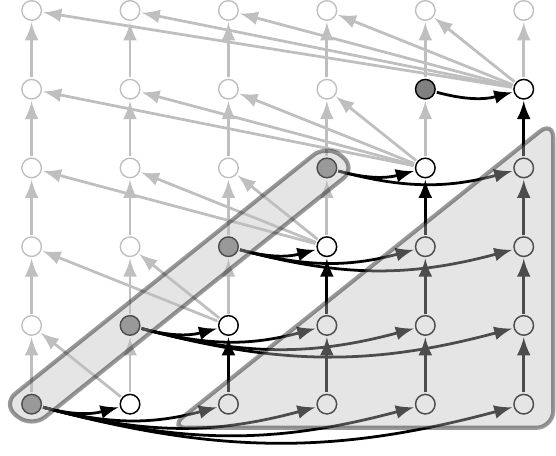}
    \caption{\label{fig:fw-2d-decomp}Decomposition into two non-interfering sub-CDAGs. Sources are in gray. May-spill sets are encircled.}
  \end{subfigure}
	\caption{\label{fig:fw-2d}Example~3}
\end{figure}

Let us have a look at the example on Fig.~\ref{fig:fw-2d}.
In the original code (\ref{fig:fw-2d-code}), notice that \texttt{k} is the outer loop index,
meaning that \texttt{A[k]} will have been modified either in the current loop iteration or the previous one
depending on the order between \texttt{i} and \texttt{k}
(Floyd-Warshall exhibits the same pattern, with three loops instead of two).
This is made clear in the single-assignment form (\ref{fig:fw-2d-sa}),
and can be visualized in the CDAG representation (\ref{fig:fw-2d-cdag}). The dependences on input values
are grayed out in (\ref{fig:fw-2d-sa}) and omitted in (\ref{fig:fw-2d-cdag}), and we will
ignore them in the discussion to keep the explanations simple.

Considering only the statement vertex $S$ in the \dfg, the dependency analysis gives the following relations:

\begin{align*}
	R_1 &= \{S[k-1,i]\, \rightarrow S[k,i]:\ \ 1\le k<N\ \ \wedge\ \ 0 \le i < N\} \\
	R_2 &= \{S[k-1,k]\rightarrow S[k,i]:\ \ 1\le k<N\ \ \wedge\ \ 0 \le i < k\} \\
	R_3 &= \{S[k,k]\ \ \ \rightarrow\ \ \ S[k,i]:\ \ 0\le k<N\ \ \wedge\ \ k < i < N\} \\
\end{align*}

The image domains of $R_2$ and $R_3$ provide a natural decomposition of the CDAG into two non-interfering sub-CDAGs,
as shown in (\ref{fig:fw-2d-decomp}).
On each part, the pattern is similar to that of Example~1 on page \pageref{fig:ex-2d} the geometric approach gives a lower bound (omitting lower order terms) $\Q(G_i) \ge \frac{N^2}{2S}$.
Since they do not interfere, Alg.~\ref{alg:maxcover} will return their sum $Q(G) \ge \frac{N^2}{S}$, independently of the parameter instance.

\subsection{Loop parametrization}
\label{ssec:loop}
As done on the example above, \tool can compute the \IO complexity of some inner loop nests of a bigger enclosing loop nest and sum them.
To this end, our scheme performs what we call \emph{loop parameterization}.
Loop parameterization considers each individual sub-CDAGs where the outermost indices are fixed (our algebraic formulation obviously allows to consider such indices as parameters without the need to explicitly enumerate them) enriched by their input vertices.
Taking the notations
\[\closin{V_i}=V_i\cup \In{V_i}\]
parametrizing the outer ``\texttt{t}'' loop with $t=\Omega$ (with $\Omega$ a parameter) allows to compute $\Q^\Omega$, a (parametric) lower bound for each individual value of $\Omega = 1,\dots,M-1$ (The sub-CDAG for $\Omega=0$ does not have the same pattern so it is ignored), and combine them
\[\Q=\sum_{1\le \Omega< M}\Q^\Omega=\sum_{1\le \Omega< M} Q_{|\closin{\{v\in V,\ t=\Omega\}}}=\sum_{1\le \Omega< M} N-S=(M-1) \cdot (N-S).\]
In more complex cases, the parametric bound can depend on the outer loop parameter $\Omega$, and we use formulas for sum of polynomials.

In ISL terms, this is done by making the outer loop index a parameter. Here the original domain
\[D_{S_1} = [M,N] \rightarrow \{S_1[t,i] : 0 < t < M \ \ \wedge\ \ 0 < i < N\}\]
becomes
\[D_{S_1}^{\Omega} = [M, N, \Omega] \rightarrow \{S_1[t, i]: t = \Omega \ \ \wedge\ \ 0 < t < M \ \ \wedge\ \ 0 < i < N\}. \]

The corresponding parts of the algorithms are highlighted in Algorithm~\ref{alg:mainloop} on page~\pageref{alg:mainloop}.

\section{$\K$-partition bound derivation}
\label{sec:partition}

In this section, we explain how to apply the geometrical reasoning of Sec.~\ref{ssec:projection} on a CDAG $G = (V,E)$, using its compact representation as a \dfg. 
We also present, in~\ref{sssec:sum-trick}, a generalization of one of the techniques introduced in~\cite{dongarra-2008,langou-gemm-14,smith-gemm-17,smith-gemm-19} that these authors used to derive a tighter lower bound for matrix multiplication.

To apply Lemma~\ref{lemma:2S} on $G$, we need to find a lower bound on the minimum number of subsets $h$ in any $\K$-partition of $G$.
The general reasoning is as follows:
\begin{enumerate}
	\item Embed $V$ in a geometric space through a map $\rho : P \subseteq V \mapsto E \subseteq \Z^d$, such that two disjoint subsets
		of $V$ are mapped to disjoint subsets of $\Z^d$. We have
		$$\card{\rho(P)} \leq \card{P}.$$
	\item Use the \dfg representation to find a subset $V' \subseteq V$ and a set of projections (group homomorphisms) 
		$\phi_1,\dots,\phi_m$ with the property that:
		\begin{equation}
			\label{eq:proj-bnd}
			\text{Any  $\K$-bounded set $P \subseteq V' \setminus \msources{V'}$ satisfies} \card{\phi_j(\rho(P))} \leq \K.
		\end{equation}
	\item Using Theorem~\ref{thm:bl}, derive an upper bound $U$ on $\card{\rho(P)}$ for any $\K$-bounded $P$. This 
		provides a lower bound $\left\lceil \frac{\card{V'\setminus \msources{V}}}{U} \right\rceil$ on the number $h$ of disjoint $\K$-bounded sets in $V'\setminus\msources{V'}$.
\end{enumerate}

\subsection{Geometric embedding, \dfg-paths and projections}
\label{ssec:embedding}

Let $S_k$ be some fixed \dfg-vertex (corresponding to one program statement).
Let $Q_1,\dots,Q_m$ be \dfg-paths all ending in $S_k$, with a common image domain $D_{k} = \{S_k[i_1,\dots,i_d]: \dots\}$.

The embedding $\rho$ is defined as:
\[\rho(P) = \{(i_1,\dots,i_d)\ \  |\ \  S_k[i_1,\dots,i_d] \in P\}, \]
that is vertices corresponding to statement $S_k$ are mapped to their corresponding $d$-dimensional point, and other vertices are ignored.

\begin{definition}[embedded projections]
  \label{def:proj}
For a given path $Q$ with relation $R_Q$, the geometric projection $\phi_Q$ is defined as follows:
\begin{itemize}
	\item If the path is a broadcast path with $R_Q = \{S_j[j_1,\dots,j_{d'}] \rightarrow S_k[i_1,\dots,i_d]: \dots \}$ for some statement $S_j$ (not necessarily $\ne S_k$),
		then the projection is directly given by the path relation $\phi_Q(i_1,\dots,i_d) = (j_1,\dots,j_{d'})$.
	\item If the path is a chain circuit with $R_Q = \{S_k[i_1,\dots,i_d] \rightarrow S_k[i_1+\delta_1,\dots,i_d+\delta_d]: \dots \}$,
		then the projection is the orthogonal projection on the hyperplane in $\Z^d$ defined by orthogonal vector $\delta = (\delta_1,\dots,\delta_d)$.
		Its explicit formulation can be computed but is not needed here.
\end{itemize}
\end{definition}

In the case of a broadcast, it is straightforward that $\phi_Q$ satisfies (\ref{eq:proj-bnd}), because $R_Q^{-1}(P)$
is included in $\In{P}$ for any $P \subseteq V$, so $\card{\phi_Q(\rho(P))} \leq \card{\In{P}} \leq \K$ for any $\K$-bounded $P$.

In the case of a chain circuit, let us call $I_Q(P) = R_Q^{-1}(P) \setminus P$. This is basically the
\inset of $P$ restricted to edges corresponding to \dfg-path $Q$, so $I_Q(P) \subset \In{P}$.
The projection $\phi_Q$ associates one point to each straight line directed by $\delta$.
Since $P \subseteq V \setminus \msources{V}$, there is at least one point in $\phi_Q(\rho(P))$ for every nonempty chain in $P$,
and $\card{\phi_Q(\rho(P))} \leq \card{\In{P}} \leq \K$.

\paragraph{Example} Consider paths $p_1 = (e_2)$ and $p_2 = (e_3)$ in Fig.~\ref{fig:ex-2d-2}.
$p_1$ is a broadcast path with relation $\{C[t] \rightarrow S[t,i]\}$, so the corresponding projection is $\phi_1(t, i) = (t)$.
$p_2$ is a chain path with relation $\{S[t,i] \rightarrow S[t+1,i]\}$, so the corresponding projection is 
$\phi_2(t, i) = \proj{(1,0)}{t, i} = (0, i)$ (see Fig.~\ref{fig:ex-2d-cdag-proj}).

\subsubsection{Summing projections}
\label{sssec:sum-trick}

In some cases, the parts of the \inset of a vertex set associated with two given path relations are actually
disjoint. Let $Q_1$ and $Q_2$ be two such paths, such that $R_{Q_1}^{-1}(P) \cap R_{Q_2}^{-1}(P) = \emptyset$
for any $P \subseteq V\setminus \msources{V}$.
If these are two broadcast paths, then since $R_{Q_i}^{-1}(P)  \subset \In{P}$, any $\K$-bounded set
$P$ satisfies the stronger inequality:
\[\card{\phi_{Q_1}(\rho(P))} + \card{\phi_{Q_2}(\rho(P))} \le \K \]
The same holds if $Q_1$ is a chain circuit and $R_{Q_1}^{-1}(P) \cap R_{Q_2}^{-1}(P) = \emptyset$, by a similar argument.

We say two paths $Q_1$ and $Q_2$ are \emph{independent} for domain $D_S$ if $R_{Q_1}^{-1}(D_S) \cap R_{Q_2}^{-1}(D_S) = \emptyset$.
We can build the \emph{\dfg-path interference graph}: vertices are paths $Q_1,\dots,Q_m$ and there is an edge
between any independent pair of paths.
In this graph, if vertices $Q_{i_1},\dots,Q_{i_t}$ form a clique, then 
\[\text{Any  $\K$-bounded set $P \subseteq D_k$ satisfies} \sum_s\card{\phi_{i_s}(P)} \leq \K.\]

\paragraph{Example} Looking again at Example~1, it is straightforward to check that paths $p_1$ and $p_2$ are independent,
so a $\K$-bounded set $P$ actually satisfies $\card{\phi_1(P)} + \card{\phi_2(P)} \le \K$.

Combining several such inequalities, such that every projection occurs at least once, leads to a general constraint of the form:
\[\sum_{j=1}^m\beta_j \card{\phi_j(E)} \leq \K,\]
for some positive coefficients $\beta_j$.
This is achieved by finding a set of maximal cliques covering all vertices in interference graph,
and summing the corresponding inequalities.
Computing these $\beta_j$'s is the role of function \coeffInterf{} in Algorithm~\ref{alg:path2Q}.

In this case, a tighter bound can be derived: 
\[\card{E} \le \prod_{i=1}^m\card{\phi_j(E)}^{s_j} \le\left(\frac{\K}{\sum_j s_j}\right)^{\sum_j s_j} \prod_{j=1}^{m}\left(\frac{s_j}{\beta_j}\right)^{s_j}.\]

The following lemma establishes this result\xspace.
\begin{lemma}
	\label{lemma:optim}
	Let $0 \leq s_1, s_2, \dots, s_m \leq 1$ and $C > 0$.
	Let $x_j$ be nonnegative integers and $\beta_j > 0$ for $1 \leq j \leq m$ such that
		$\sum_{j=1}^m \beta_j x_j \leq C$.
	Then
	\begin{equation}
		\prod_{j=1}^m x_j^{s_j} \leq   \left(\frac{C}{\sum_j s_j}\right)^{\sum_j s_j} \prod_{j=1}^{m}\left(\frac{s_j}{\beta_j}\right)^{s_j} .
	\end{equation}
\end{lemma}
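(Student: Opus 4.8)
The plan is to recognize the statement as the solution of a simple constrained maximization (maximize $\prod_j x_j^{s_j}$ subject to the linear budget $\sum_j \beta_j x_j \le C$) and to prove it directly by the weighted arithmetic-geometric-mean inequality: for nonnegative reals $a_1,\dots,a_m$ and weights $w_j \ge 0$ with $\sum_j w_j = 1$ one has $\prod_j a_j^{w_j} \le \sum_j w_j a_j$. The whole argument is then a matter of choosing the right substitution so that the weighted average collapses onto the budget constraint.

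First I would dispose of the degenerate indices. Writing $\sigma = \sum_{j=1}^m s_j$, any index with $s_j = 0$ contributes a factor $x_j^{0}=1$ to the left-hand product and a factor $(s_j/\beta_j)^{0}=1$ to the right-hand product, so such indices may be dropped from both sides; and since $\beta_j x_j \ge 0$, discarding their contributions to the constraint only tightens it, so $\sum_{j:\, s_j>0}\beta_j x_j \le C$ still holds. Hence I may assume $s_j > 0$ for all $j$ (if $\sigma = 0$ the claim is the trivial $1 \le 1$).

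Then I would apply weighted AM-GM with weights $w_j = s_j/\sigma$ (so that $\sum_j w_j = 1$) to the values $a_j = \beta_j x_j / s_j$. On the arithmetic side,
\[
\sum_{j} w_j a_j = \frac{1}{\sigma}\sum_{j}\beta_j x_j \le \frac{C}{\sigma},
\]
while on the geometric side $\prod_j a_j^{w_j} = \big(\prod_j (\beta_j x_j/s_j)^{s_j}\big)^{1/\sigma}$. Combining these and raising to the power $\sigma$ yields $\prod_j (\beta_j x_j/s_j)^{s_j} \le (C/\sigma)^{\sigma}$. Finally I would factor $\prod_j(\beta_j x_j/s_j)^{s_j} = \big(\prod_j x_j^{s_j}\big)\cdot \prod_j (\beta_j/s_j)^{s_j}$ and move the constant factor to the right, which gives exactly the claimed bound with $\sigma = \sum_j s_j$.

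There is essentially no hard step here: the only genuine idea is the substitution $a_j = \beta_j x_j/s_j$, chosen precisely so that $\sum_j w_j a_j = \frac{1}{\sigma}\sum_j \beta_j x_j$ and the budget enters cleanly. The points requiring a word of care are purely bookkeeping: the handling of $s_j = 0$ above, and the case $x_j = 0$ (where a factor of the left side vanishes, $a_j=0$, and the inequality holds trivially). I would also note that although the $x_j$ are integers in the intended application, AM-GM holds for all nonnegative reals, so integrality plays no role and no compactness or optimality argument over a discrete set is needed. (Equivalently one could maximize $\sum_j s_j \log x_j$ under $\sum_j \beta_j x_j \le C$ by Lagrange/KKT, whose stationarity condition $\beta_j x_j = c\, s_j$ reproduces the balanced point $a_j = C/\sigma$; but the AM-GM route avoids any appeal to differentiability.)
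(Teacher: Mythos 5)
Your proof is correct and complete, but it takes a genuinely different route from the paper's. The paper proves Lemma~\ref{lemma:optim} with Lagrange multipliers: it forms the Lagrangian $L(x,\lambda)=\prod_j x_j^{s_j}-\lambda\bigl(\sum_j \beta_j x_j - C\bigr)$, sets the partial derivatives to zero, solves the stationarity conditions to obtain the balanced point $x_j = \frac{C s_j}{\beta_j \sum_i s_i}$, and evaluates the objective there. Your weighted AM--GM argument with weights $w_j=s_j/\sigma$ and values $a_j=\beta_j x_j/s_j$ reaches the same bound in one step, and the substitution you chose is exactly the one whose AM--GM equality case coincides with the paper's stationary point, so the two arguments are the same optimization viewed from two sides. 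What your route buys: it is rigorous as stated, since AM--GM is a global inequality, whereas the paper's proof only locates a critical point and tacitly assumes it is the global constrained maximum (closing that gap needs, e.g., concavity of $\sum_j s_j\log x_j$ over the convex feasible set together with the observation that the maximum lies on the boundary $\sum_j\beta_j x_j=C$); you also handle the degenerate cases $s_j=0$, $x_j=0$, and $\sigma=0$, which the paper passes over silently. What the paper's route buys: it exhibits the maximizer explicitly, and those balanced values are reused downstream --- the expression for $U$ in Alg.~\ref{alg:path2Q} is precisely the objective evaluated at that point --- so the calculus derivation doubles as documentation of where that formula comes from. Your closing parenthetical about KKT stationarity shows you saw this connection yourself.
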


\begin{proof}
	We use Lagrange multipliers to find the constrained maximum of the function $\psi: x \in \R^m \mapsto \prod_{j=1}^m x_j^{s_j}$.
	\[ L(x, \lambda) = \prod_{j=1}^m x_j^{s_j} - \lambda \left( \sum_{j=1}^m \beta_j x_j - C \right) \]
	Partial derivatives are:
	\begin{align*}
		\ddx{L}{x_j} &= s_j x_j^{s_j-1} \prod_{k \ne j} x_k^{s_k} - \lambda \beta_j \ \   , 1 \leq j \leq m \\
		\ddx{L}{\lambda} &= C - \sum_{j=1}^m \beta_j x_j
	\end{align*}
	Setting them to be 0, we get:
	\[
		\lambda \beta_j x_j = s_j \prod_{k=1}^m x_k^{s_k}, 1 \leq j \leq m
  \]
	Summing for $1 \leq j \leq m$ gives:
	\[ \lambda \sum_{j=1}^m \beta_j x_j = \lambda C = \left( \prod_{k=1}^m x_k^{s_k} \right) \cdot \sum_{j=1}^m s_j \]
	From which we derive:
	\[ x_j = \frac{C s_j}{\beta_j\sum_i s_j} , 1 \leq j \leq m \]
	And finally:
	\[\psi(x) \leq \prod_{j=1}^m x_j^{s_j} = \prod_{j=1}^m \left(\frac{Cs_j}{\beta_j \sum_i s_i}\right)^{s_j} = \left(\frac{C}{\sum_j s_j}\right)^{\sum_j s_j} \prod_{j=1}^{m}\left(\frac{s_j}{\beta_j}\right)^{s_j}. \]

\end{proof}

With this more general formulation, the choice of the $s_j$ coefficients is more involved
than the linear optimization problem of Sec.\ref{ssec:projection}, and is developed in Sec.~\ref{ssec:lb}.

\subsubsection{Kernel subgroup lattice}
As already mentioned, the subgroup lattice (Def.~\ref{def:lattice} used in Lemma~\ref{lm:lattice}) is not necessarily finite, so it is better to build it step-by-step, updating it each time we add a new path. 
We set a time limit for the computation to converge, and do not add the path if this limit is reached. 
Function \subspaceclosure{} in Algorithm~\ref{algo:lattice} tentatively updates the current subgroup lattice with a new one, returning the original lattice in case of a timeout.

\begin{algorithm}[h]
	\small

\Fn{\subspaceclosure}{
  \Input{Lattice of subgroups $\L$, subgroup to add $K$}
  \Output{updated set of subspaces $\L'$}
  $\L'=\L$\;
  \While{\textbf{not} timeout}{
    \lIf{$\exists H\in \L',\ H\cap K\not\in \L'$}{$\L'=\L' \cup \{H\cap K\}$}
    \lElseIf{$\exists H\in \L',\ H+K\not\in \L'$}{$\L'=\L' \cup \{H+K\}$}
    \lElse{\Return $\L'$}
  }                  
  \Return $\L$ if timeout
}

	\caption{\label{algo:lattice}Update the subgroup lattice with a new kernel}
\end{algorithm}

\subsection{Finding paths}
\label{ssec:paths}
The function that generates the set of paths $\mathcal{P} = \{Q_1, \dots, Q_m\}$ for a \dfg-vertex $S$ is named \genpaths\xspace( Alg.~\ref{alg:genpaths}).
Starting from $S$, it uses a simple backward traversal (backward DFS) that favors walking through predecessors with largest domain.
As the number of paths can be combinatorial, \tool sets a timeout to avoid a computational blow-up.
For a path $P = (S_1,\dots,S_t = S)$, we store sub-path relations $R_{S_i\rightarrow S}$ 
for every intermediate statement $S_i$.
This is necessary to ``remember'' exactly which CDAG vertices are included in the computation.

\begin{algorithm}[h]
	\small

\Fn{\genpaths}{
  \Input{a \DFG $G=(\V,\E)$, a statement $S\in \E$}
  \Output{set of paths $\mathcal{P}$}
  starts from $S$ and backward traverse to build any possible path that reach $S$\;
	drop paths for which $R_{S'\rightarrow S}(D_{S'})$ has lower dimensionality than $D_S$\;
	drop paths if $\neg (\isbroadcast(P) \vee \iscircuit(P))$\;
}

	\caption{\label{alg:genpaths}Generate paths}
\end{algorithm}

\subsection{Computing the lower bound}
\label{ssec:lb}
Once we have found a path combination, it is quite straightforward to apply the theoretical
results introduced above.

This is detailed in function \pathtoQ{} in Alg.~\ref{alg:path2Q}.
Here, the role of the function call to \coeffInterf{} is to compute the coefficients $\beta_j$.
It does so by finding a clique cover of the DFG-path independence graph and summing the constraint formed by each clique as explained in Sec.~\ref{sssec:sum-trick}. 
The values for coefficients $s_j$ that satisfy inequalities~\ref{eq:bl1b} are then determined using convex optimization so as to minimize as much as possible the quantity
\begin{equation}
	\label{eq:U}
	U = \left(\frac{\ST}{\sum_j s_j}\right)^{\sum_j s_j} \prod_{j=1}^{m}\left(\frac{s_j}{\beta_j}\right)^{s_j}.
\end{equation}
Indeed this expression being an upper bound on $\card{\rho(P)}$ (see Lemma~\ref{lemma:optim}), minimizing it has the effect of tightening\footnote{Observe that any values of $s_j$ leads to a correct bound} the computed \IO complexity.

The constraints in (\ref{eq:bl1b}) describe a convex polyhedron, but the objective function (\ref{eq:U})
is not convex.
In the basic case when all projections are simply bounded by $\ST$, the objective is 
\[U := \ST^{\sum_j s_j},\]
so a natural objective is to minimize $\sum_j s_j$ in this generalized case.
It can be easily checked that (\ref{eq:U}) is indeed equal to this when $\beta_j = \frac{1}{m}$ for all $j$ and
$s_1 = \dots = s_j$.

Notice that the first factor in the expression of $U$ depends only on the value of the sum. 
So once $\sum s_j$ is fixed, it is natural to minimize the second factor $\prod_{j=1}^{m}\left(\frac{s_j}{\beta_j}\right)^{s_j}$, which is convex as a function of $(s_1, \dots, s_m)$. 
This convex optimization problem can be solved with an appropriate tool, such as IPOPT~\cite{ipopt}. 
he last step amounts to set an appropriate value for $T$ that provides a lower bound of $\left\lfloor\frac{\card{D_S}}{U}\right\rfloor\times \T -|I|$ (see Lemma~\ref{lemma:2S}) as big/tight as possible. 
Here, $D_S$ corresponds to $V \setminus \msources{V}$ in the CDAG view, and $I$ is the frontier of the domain, corresponding to $\msources{V}$. 
$\T$ is chosen as $\frac{1}{{\sum_j{s_j}} - 1} \S$, because it maximizes the first term asymptotically.

Finally, we store the may-spill set corresponding to the CDAG for which this lower bound is valid.

Taking the example given in Figure~\ref{fig:ex-2d}, it is sufficient to check condition (\ref{eq:bl1b}) on $H_1 = \{(0, i)\}, H_2 = \{(t, 0)\}$,
and the optimization problem is:
\begin{align*}
	\text{Minimize\ \ \ \ \ } &s_1 + s_2 \text{ and then } s_1^{s_1} s_2^{s_2}  \\
	\text{s.t.\ \ \ } & s_1 \ge 1,\ \ s_2 \ge 1
\end{align*}
The solution is $s_1 = s_2 = 1$,
and $U = (\ST/2)^2 = \S^2$ (because $\T = \frac{1}{s_1 + s_2 - 1} \S = 1 \cdot \S$).
$\card{D_S} = MN$ and $\card{I} = N+M$, so 
\[\Q \ge \left\lfloor\frac{MN}{\S^2}\right\rfloor\times \S-N - M. \]

\begin{algorithm}[h]
	\small

\Fn{\pathtoQ}{
  \Input{paths $\P = \{P_1,\dots,P_m\}$ with domain $D$ and lattice $\L$ }
  \Output{complexity $\Q$}
	$I:=\bigcup_{P_i\in \P} R_{P_i}^{-1}(D)$\;
  $d := \dim{D}$\;
	$(\beta_1,\dots,\beta_m) := \coeffInterf(\P, D)$\;
  $(s_1, \dots, s_m)$ := convex-opt \{ \\ \Indp
  variables: $\{s_1, \dots, s_m \in \mathbb{Q}^+\}$\\
	objective: minimize $\sum_{j}{s_j}$ and then $\prod_{j}\left(\frac{s_j}{\beta_j}\right)^{s_j}$\\
  constraints:
	$\forall H \in \L, \ \ \sum_{j} s_j \rk{\phi_j(H)} \geq \rk{H}$ where $\phi_j = \text{proj}^{\perp}_{\kernel(P_j)}$\} \;
  \Indm
	$\T$ := $\frac{1}{{\sum_j{s_j}} - 1} \S$ \;
	$U$ := $\prod_{j=1}^m \left(\frac{\ST s_j}{\beta_j \sum_i s_i}\right)^{s_j}$\;
  $\Q:=\max\left(\left\lfloor\frac{|D|}{U}\right\rfloor\times \T-|I|,0\right)$\;
	$\Q.\mayspill := \mayspill(\P, D)$\;
}
\Fn{\coeffInterf}{
	\Input{paths $\P = \{P_1,\dots,P_m\}$, domain $D$}
	\Output{coefficients $(\beta_1,\dots,\beta_m)$ such that $\sum{\beta_j \phi_j(E)} \leq \K$ for any $\K$-bounded set $E$}
	$G$ := graph with $V := P_1,\dots, P_m$ and $E := (P_i, P_j),\  R_{P_i}^{-1}(D)\cap R_{P_i}^{-1}(D) \ne \emptyset$\;
	$\mathcal{I}$ := set of maximal independent sets of $G$ such that every node belongs to at least one set (greedy construction)\;
	$\beta_j := \#\{I \in \mathcal{I}, P_j \in I\} / \card{\mathcal{I}}$\;
}

\Fn{\mayspill}{
	\Input{paths $\P = \{P_1,\dots,P_m\}$, domain $D$}
	\Output{may-spill set $D^{\text{ms}}$}
	$D^{\text{ms}} := \emptyset$\;
	\lForEach{broadcast path $P_i = (S_0, S_1,\dots,S_t = S) \in \P$}{
		$D^{\text{ms}} := D^{\text{ms}} \cup \left( \bigcup_{j = 0}^t R_{S_j \rightarrow S}^{-1}(D) \right)$ }
	\lForEach{chain circuit $P_i = (S_0, S_1,\dots,S_t = S) \in \P$}{
		$D^{\text{ms}} := D^{\text{ms}} \cup \left( \bigcup_{j = 1}^t R_{S_j \rightarrow S}^{-1}(D) \right)
		\cup \left(R_{S_0 \rightarrow S}^{-1}(D) \cap \left( \bigcup_{k \ne j} R_{P_k}^{-1}(D)\right) \right)$ }
}

\Fn{\kernel}{
	\Input{path $P_j$}
	\Output{linear space $K$ such that the orthogonal projection $\phi_j = \text{proj}^{\perp}_{\kernel(P)}$}
	\lIf{\isbroadcast($P_j$)}{
		\Return $\kernel (j_1 \dots j_{d'})$ where $R_{P_j} = \{T[j_1,\dots,j_{d'}] \rightarrow [i_1,\dots,i_d]:\dots \}$
		}
	\lIf{\iscircuit($P_j$)}{
		\Return $(\delta_1,\dots,\delta_d)$ where $R_{P_j} = \{S[i_1,\dots,i_d] \rightarrow S[i_1+\delta_1,\dots,i_d+\delta_d]: \dots \}$
		}
}

        \vspace{\baselineskip}
	\caption{\label{alg:path2Q}Derivation of a lower bound from a path combination with the partition method}
\end{algorithm}

\section{Wavefront bound derivation}
\label{sec:wf}

\subsection{Theoretical results}
An alternative way to derive data movement lower bounds in the no-recomputation model is the \emph{wavefront} abstraction. 
At any point in an execution of a RW-game, the wavefront is the set of vertices that have been computed but whose result is still needed by some successor (sometimes called the set of \emph{live} vertices). 
If the size of the wavefront at some point in the execution is greater than the size of the fast memory, then necessarily some vertices have to be spilled to the slow memory and thus loaded  using rule $(R1)$. 
This is formalized in the definition an lemma below:

\begin{definition}[Wavefront]
	\label{def:wavefront}
	Let $\RR$ be an execution of the RW-game on a CDAG $G$, and $v$ a vertex of $G$.
	Consider the time $t$ in the execution 
	just before $v$ has been computed (i.e. just after a white pebble has been placed on $v$ using rule $(R2)$).
	The \emph{wavefront} $W_{\RR}(v)$ in execution $\RR$ is the set of vertices that, a time $t$,
	have been computed (i.e. have a white pebble) but have some successor that does not.
\end{definition}

 \begin{lemma}[Min-wavefront~\cite{elango-spaa2014}]
   \label{lemma:wf}
	 Let $\S$ be the capacity of the fast memory, and $G = (V, E)$ be a CDAG.
	 Let $w_G^{\textrm{min}} = \min_{\RR} \left( \max_{v \in V} \card{W_{\RR}(v)} \right)$, so that any valid RW-game on $G$ has a wavefront of size at least $w_G^{\textrm{min}}$.
	 Then, \[\Q\ge w_G^{\textrm{min}}-\S.\]
 \end{lemma}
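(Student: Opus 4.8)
The plan is to prove the inequality for a \emph{single} game first and then minimize over all games. Concretely, I would show that for every valid $\S$-RW game $\RR$ on $G$, the number of loads satisfies $Q^{\RR}\ge \max_{v\in V}\card{W_{\RR}(v)}-\S$. Granting this, since $\max_{v}\card{W_{\RR}(v)}\ge w_G^{\textrm{min}}$ by definition of the minimum, we obtain $Q^{\RR}\ge w_G^{\textrm{min}}-\S$ for every $\RR$, and taking the minimum over all games yields $\Q(G)=\min_{\RR}Q^{\RR}\ge w_G^{\textrm{min}}-\S$, which is exactly the claim.

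So I fix a game $\RR$ and let $v^\star$ be a vertex attaining the maximal wavefront size; write $W=W_{\RR}(v^\star)$ and let $t$ be the associated instant in the execution. I would then split $W$ according to the red pebbles present at time $t$: let $R$ be the set of vertices carrying a red pebble at $t$, so $\card{R}\le\S$ by the capacity constraint, and consider the remaining wavefront vertices $W\setminus R$, which number at least $\card{W}-\S$.

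The heart of the argument is a charging scheme that assigns to each $u\in W\setminus R$ a distinct future load. By definition of the wavefront, such a $u$ already carries a white pebble at $t$ yet has a successor $w_u$ that has not been computed, hence will be computed strictly after $t$. Computing $w_u$ by rule (R2) requires a red pebble on $u$ at that later moment, but $u$ has no red pebble at $t$; therefore a red pebble must be (re)placed on $u$ somewhere after $t$. This is precisely where the \emph{no-recomputation} hypothesis is essential: since $u$ already holds a white pebble, rule (R2) can never again apply to $u$, so the only rule that can deposit a red pebble on $u$ is (R1), i.e. a genuine load. As the $u$'s are distinct vertices, these loads are distinct applications of (R1), so $Q^{\RR}$ is at least $\card{W\setminus R}\ge\card{W}-\S$, completing the single-game inequality.

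I expect the main obstacle to be the careful bookkeeping in this charging step: one must ensure (i) that $w_u$ really is computed after $t$, which holds because the game is complete and $w_u$ is uncomputed at $t$; (ii) that the load charged to $u$ occurs strictly after $t$, which holds because $u$ lacks a red pebble precisely at $t$ but must carry one later; and (iii) that (R2) is genuinely unavailable for reactivating $u$, which hinges on the rule that white pebbles are never removed. Input vertices lying in $W\setminus R$ are handled by the same reasoning, since they too already bear a white pebble and can therefore only acquire a red one through (R1).
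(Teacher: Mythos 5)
Your proof is correct. The paper does not give its own proof of this lemma (it is imported from~\cite{elango-spaa2014}); the one-line justification it offers---a wavefront larger than the fast memory forces some computed-but-still-needed vertices to lack a red pebble, and since recomputation is forbidden they can only regain one through rule (R1)---is exactly the charging argument you formalize, so your approach coincides with the paper's.
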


 This lemma is quite general and cannot be applied directly, as $w_G^{\textrm{min}}$ is not usually computable. 
 We thus provide the following result, which uses a condition on the structure of the CDAG to get a lower bound on the size of a wavefront in any execution.

\begin{corollary}
 \label{cor:wf}
	Let $G = (V, E)$ be a CDAG, and $V_1,V_2$ be disjoint subsets of $V$ such that every vertex in
	$V_2$ is reachable from every vertex in $V_1$ through some path in $G$.
 Let $L_1,\dots,L_m$ be disjoint paths in $G$, starting in $V_1$ and ending in $V_2$.
 More formally:
 \[\left\lbrace
			 \begin{array}{ll}
				 \forall L_j = (v_1^j,\dots,v_{l_j}^j), & v_1^j \in V_1 \text{\ and\ } v_{l_j}^j \in V_2 \\
	  V_1\times V_2 \subset E^*
			 \end{array}\right.
 \]

 Then,
	\[w_G^{\textrm{min}} \ge m. \]
	By Lemma~\ref{lemma:wf}, this implies:
	\[\Q \geq m - \S. \]
\end{corollary}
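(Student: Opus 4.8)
The plan is to exhibit, for an arbitrary RW-game execution $\RR$, a single instant at which the wavefront already contains one vertex from each of the $m$ disjoint paths, so that $\max_{u}\card{W_{\RR}(u)}\ge m$. The whole argument hinges on the reachability hypothesis $V_1\times V_2\subset E^*$, which I would use first to pin down the relevant time. Since every sink in $V_2$ is reachable from every source in $V_1$, every source is a strict ancestor of every sink in $G$ (strict because $V_1$ and $V_2$ are disjoint, so the connecting path has length at least one). In the no-recomputation RW-game a white pebble is placed on a vertex only once all its predecessors carry red pebbles, hence only after they have themselves been computed; by induction along any ancestor path this forces, in $\RR$, every source $v_1^j$ to be computed strictly before every sink $v_{l_{j'}}^{j'}$. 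In particular the \emph{last} source to be computed precedes the \emph{first} sink, so there is a well-defined instant $t^\star$, just after the last source receives its white pebble, at which all sources $v_1^j$ are computed while no sink $v_{l_j}^j$ is.

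Next I would argue that each path contributes a distinct live vertex at $t^\star$. Fix a path $L_j=(v_1^j,\dots,v_{l_j}^j)$. Consecutive path vertices are joined by an edge of $G$, so they are computed in increasing index order and the computed vertices of $L_j$ form a prefix; let $v_{i_j}^j$ be the last vertex of $L_j$ carrying a white pebble at $t^\star$. Because $v_1^j$ is computed but $v_{l_j}^j$ is not, we have $1\le i_j<l_j$, so $v_{i_j}^j$ has been computed while its path-successor $v_{i_j+1}^j$ has not. Thus $v_{i_j}^j$ is a live vertex at $t^\star$, i.e. $v_{i_j}^j\in W_{\RR}(v)$ for the vertex $v$ computed last among the sources. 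Since the paths $L_j$ are vertex-disjoint, the vertices $v_{i_1}^1,\dots,v_{i_m}^m$ are pairwise distinct, yielding $\card{W_{\RR}(v)}\ge m$ and therefore $\max_{u}\card{W_{\RR}(u)}\ge m$.

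As the execution $\RR$ was arbitrary, taking the minimum over all executions gives $w_G^{\textrm{min}}=\min_{\RR}\max_{u}\card{W_{\RR}(u)}\ge m$; Lemma~\ref{lemma:wf} then immediately yields $\Q\ge w_G^{\textrm{min}}-\S\ge m-\S$, which is the claim.

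The main obstacle is purely bookkeeping around the definition of the wavefront, which is anchored to the moments \emph{just after} an $(R2)$ application rather than to arbitrary times. When the last-computed source is an ordinary compute vertex, $t^\star$ is exactly the wavefront moment $W_{\RR}(v)$ for that source and nothing more is needed. The only delicate point is when all sources are input vertices (already white-pebbled in the initial state), since then no $(R2)$ step is associated with them; in that case I would instead take $v$ to be the last compute vertex activated strictly before the first sink and check that, just after it, all sources are still computed and all sinks still pending — a status that the reachability hypothesis again guarantees. Everything else in the argument is routine.
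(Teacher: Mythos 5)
Your proof is correct and takes essentially the same approach as the paper's: both identify a single instant at which every vertex of $V_1$ is computed while no vertex of $V_2$ is, and then extract one live vertex per path, using disjointness of the $L_j$ to get $m$ distinct wavefront members. The only difference is the anchor --- the paper takes $v$ to be the \emph{first} vertex of $V_2$ to be computed (all of $V_1$ is then computed since $v$ depends on it, and no sink is), which silently avoids your input-source corner case because that sink necessarily has predecessors and is therefore computed via rule (R2) --- whereas your anchor at the last-computed source requires the extra observation that all sources precede all sinks plus the patch you describe.
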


\begin{proof}
	Let $v$ be the first vertex to be computed among vertices of $V_2$ in some fixed RW-game on $G$.
	By the second condition, every vertex in $V_1$ must have been computed since $v$ depends on all of them.
	Just before a red pebble is placed on $v$, no vertex in $V_2$ has a red pebble.
	Therefore there is a \alive vertex (that has a red pebble and a successor without one) in every path $L_j$.
	Thus $w_G^{\textrm{min}} \ge m$.
\end{proof}

The common case to use this technique to get a strong data movement lower bound is to combine it with the parametric CDAG decomposition (Sec.~\ref{ssec:loop}).

\paragraph{Example}In the example of Fig.~\ref{fig:re1d}, Corollary~\ref{cor:wf} can be applied on each subgraph, with $V_1$ and $V_2$ chosen as shown on Fig.~\ref{fig:re1d-wf}. It can be easily checked that every vertex in $V_1$ can reach every vertex in $V_2$, and there are $N$ disjoint paths $L_j$, so this gives a lower bound
\[ \Q(G_{|V_t}) \ge N - \S.\]

\begin{figure}
	 \includegraphics[width=0.4\textwidth]{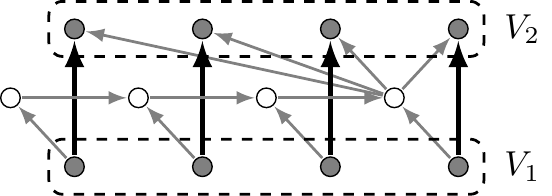}
	 \caption{\label{fig:re1d-wf}Application of Cor.~\ref{cor:wf} on a sub-CDAG. Paths $L_j$ (single edge) are shown in bold. }
	 
\end{figure}

\subsection{Implementation}
To apply this technique, our algorithm tries to uncover a set of disjoint paths satisfying the hypotheses of Corollary~\ref{cor:wf}. 
To reduce the search space, Algorithm~\ref{alg:wf2Q} looks for a much more constrained pattern, in which all disjoint paths $L_j$ begin and end in different instances of some statement $S$, with an increment in the innermost parametrized loop index (see Sec.~\ref{ssec:loop}). 
This amounts to finding an injective circuit in the \dfg with a relation of the form $\{S[I_1 \dots I_d,i_{d+1}\dots i_D] \rightarrow S[I_1 \dots I_d+1,i_{d+1}\dots i_D]\}$.

Intuitively, we look at two ``slices'' of the CDAG, each of them of dimension $D-d$, representing two successive iterations of the body of the loop iterating over dimension $d$ (with index $I_d$ and $I_d+1$, respectively). 
We try to find subsets of these two slices (corresponding to $V_1$ and $V_2$) such that every vertex in the first one can reach every vertex in the second one.

In Algorithm~\ref{alg:wf2Q}, the first loop computes the following relations:
\begin{itemize}
	\item $R_{S\rightarrow S}$ is the union of all path relation of elementary circuits from $S$ to itself,
	\item $\RL_{S\rightarrow S}$ is the union of all \emph{affine} path relations (where every subpath relation is affine) of elementary circuits $S \rightarrow S$,
  \item $R_{S\rightarrow *}$ is the union of all path relations of elementary paths from $S$ to any other \dfg-vertex.
\end{itemize}

This is then used to compute the relation $R_{I_d}$.
Here $I_d$ is the index of the innermost parametrized loop, and $R_{I_d}$
is the restriction of $\RL_{S\rightarrow S}$ to paths that do exactly one step along dimension $d$, not changing any other index.
This represents a set of disjoint paths going from instances of statement $S$ in ``slice'' $I_d$ to instances of $S$ in ``slice'' $I_d + 1$.

To apply Corollary~\ref{cor:wf}, we need to restrict $R_{I_d}$ to a domain $W$ where every starting vertex reaches every ending vertex.
To do this, we first compute $X =\left(R_{\textrm{complete}}-\left(R_{S\rightarrow S}\right)^*\right)\left(\domain{R_{I_d}}\right)$,
that is, all vertices in ``slice'' $I_d + 1$ that are \emph{not reachable} from ``slice'' $I_d$.
We then take $W = \domain{R_{I_d}} - R_{I_d}^{-1}\left(X\right)$, that is, we only keep vertices in ``slice'' $I_d$ that can reach every vertex in slice $I_d + 1$.
Application of Corollary~\ref{cor:wf} with $V_1 = W$, $V_2 = R_{I_d}(W)$ and $\{L_j\}_j = R_{I_d}$ gives $\Q \ge \card{W} - \S$.

\begin{algorithm}[h]
	\small

\Fn{\wavefronttoQ}{
  \Input{\dfg $G = (\V, \E)$, vertex $S\in \V$, parametrized dimensions $\Omega_d$}
  \Output{lower bound $\Q$}
  \BlankLine
  $D:=\dim{S}$\;
  $R_{S\rightarrow S}:=\emptyset$; $\RL_{S\rightarrow S}:=\emptyset$; $R_{S\rightarrow *}:=\emptyset$\;
  \ForEach{$S_j\in \V$ in topological order, from $S$ (excluded) to $S$ (included)}{
    $A:=\{(S_i,S_j)\in E,\ R_{S_i\rightarrow S_j}\textrm{ is affine and }R_{S_i\rightarrow S_j}^{-1}\textrm{ injective}\}$\;
    $\RL_{S\rightarrow S_j}:=\bigcup_{(S_i,S_j)\in A} \RL_{S\rightarrow S_i}\circ R_{S_i\rightarrow S_j}$\;
    $R_{S\rightarrow S_j}:=\bigcup_{(S_i,S_j)\in E} R_{S\rightarrow S_i}\circ R_{S_i\rightarrow S_j}$\;
    $R_{S\rightarrow *}:=R_{S\rightarrow *} \cup R_{S\rightarrow S_j}$\; 
  }
  $R_{I_d}$ := $\RL_{S\rightarrow S}\cap \{S[I_1 \dots I_d,i_{d+1}\dots i_D] \rightarrow S[I_1 \dots I_d+1,i_{d+1}\dots i_D]\}$\;
  $R_{\textrm{complete}}:=\{S[I_1 \dots I_d,i_{d+1}\dots i_D] \rightarrow S[I_1 \dots I_d+1,i'_{d+1}\dots i'_D]\}$\;
	$W:=\domain{R_{I_d}}-R_{I_d}^{-1}\left(\left(R_{\textrm{complete}}-\left(R_{S\rightarrow S}\right)^*\right)(\domain{R_{I_d}})\right)$\; 
  $\Q := \max\left(|W|-\S,0\right)$\;
  $\Q.\mayspill:=W\cup \left(R_{S\rightarrow *}(W)\cap R^{-1}_{S\rightarrow *}\left(R_{I_d}(W)\right)\right)$
}

        \vspace{\baselineskip}
	\caption{\label{alg:wf2Q}Derivation of a lower bound with the wavefront method}
\end{algorithm}

\section{Complete framework}
\label{sec:complete}

\subsection{\dfg construction}
Our front end (PET~\cite{verdoolaege2012pet}) takes as input a program in C where the to-be analyzed regions (SCoPs -- Static Control Parts) are delimited by \verb+#pragma scop+ and \verb+#pragma endscop+ annotations. 
For PET, all array accesses are supposed not to alias with one another. 
Any scalar data is assumed to be atomic and all of the same size: 
our CDAG is not weighted (which is a limitation of our implementation and not a conceptual limitation of the approach). 
As illustrated by the example of Fig.~\ref{fig:ex-2d} and~\ref{fig:ex-2d-2} (multidimensional-)array accesses are affine expressions of static parameters and loop indices. 
A static parameter can be the result of any complex calculation but has to be a fixed value for the entire execution of the region. 
Loop bounds and more generally control tests follow the same rules (affine expressions). 
As a consequence, the iteration space is a union of (parametric) polyhedra, and memory accesses (read and writes) are piecewise affine functions. 
This representation of the region execution that fits into the polyhedral framework~\cite{polyhedron} allows to compute data dependencies using standard data-flow analyses.

PET outputs a polyhedral representation of the input C
program, from which we extract a \emph{\DFG (\dfg)} $G = (\V, \E)$ (see Sec.~\ref{sec:DFG}).

\subsection{Instances of parameter values}
As briefly explained in Sec.~\ref{sec:decomp}, to generate bounds that are as tight as possible, our heuristic needs to take decisions.
Such decisions are based on our ability to compare the size of two different domains sizes or even the complexity of two different sub-CDAGs.
The overall framework being parametric (it provides complexities that are functions of parameter values and cache size), a total order is obtained by considering a specific instance of parameter values, taken as an additional input alongside the C program.
One needs to outline that a specific instance of parameter values is \emph{not} considered by the algorithm as a precondition:
For a given instance, the computed lower bound expression is universal i.e. is correct for \emph{any} parameter values.
For completeness, several instances are considered, and to each instance $\i$ is associated a complexity $\Q^\i$.
As we have $\Q\ge\Q^\i$ for any instance, denoting $\I$ the set of all considered instances, they are simply combined as:
\[\Q=\max_{\i\in\I}\left(\Q^\i\right).\]

\subsection{Main algorithm}
\label{ssec:mainloop}
Alg.~\ref{alg:mainloop} contains the skeleton of the main part of \tool, with links to corresponding subsections.
Its outermost loop (Line~\ref{l:main:param}) corresponds to the loop parametrization detailed in Sec.~\ref{ssec:loop}: for each loop depth $d$, outermost indices are fixed (as parameter $\Omega_d$ -- Line~\ref{l:main:omega}), and parametrically computed lower bounds are summed (when not interfering -- Line~\ref{l:main:inter}) over all iterations (Line~\ref{l:main:sumloop} in \addparam).
The loop on statements $S$ (Line~\ref{l:main:S}) allows to decompose the full CDAG into as many ``$S$-centric'' sub-CDAGs.
The so-obtained bounded set of lower bounds $\Ginterf$ are combined using procedure \maxcover (Line~\ref{l:main:maxcover}) as described in Sec.~\ref{ssec:maxcover}. To take compulsory misses into account, the size of the input data of the program is added to the expression.

For each statement $S$, both techniques ($K$-partition and wavefront resp. Line~\ref{l:main:partition} and Line~\ref{l:main:wavefront}) generate lower bounds.
As opposed to the implicitly considered ``$S$-centric'' sub-CDAGs for the wavefront reasoning, an ``$S$-centric'' sub-CDAGs for the $K$-partition reasoning (which is built by finding a set $\P$ of \dfg-paths that terminate at $S$ -- Lines~\ref{l:main:gpathb}-\ref{l:main:gpathe} through function \genpaths) does not necessarily spans all the $S$-vertices ($D_S$) of the CDAG.
So several (non-intersecting) sub-CDAGs can be built until no more interesting lower bound can be derived (Line~\ref{l:main:gpathe}). 
Hence, for each statement $S$, a copy $G'$ of the CDAG $G$ is made: as new set of paths ($S$-centric sub-CDAGs) and corresponding lower bounds are computed, the corresponding may-spill set is removed from $G'$ (Line~\ref{l:main:prunespill}).

\begin{algorithm}[h]
	\small

\Fn{\mainloop}{
  \Input{\DFG $G=(\V,\E)$, an instance $I$}
  \Output{lower bound $\Qlow$}
	$\Ginterf = \emptyset$\;
	\tikzmark{l1a}Let $D$ be the max loop depth\;
  \ForEach{loop level $0\leq d<D$}{\label{l:main:param}
    \ForEach{statement $S\in \V$ surrounded by at least $d+1$ loops}{\label{l:main:S} 
			 $\Omega_d:=[I_1,\dots,I_d]\rightarrow\{S[i_1,\dots,i_D]:\ i_1=I_1\wedge \dots i_d=I_d\}$\tikzmark{l1b}\;\label{l:main:omega}
        Let $G'$ be a copy of $G$\;
				\tikzmark{pa}\While{elapsedTime < timeout}{
          Let $D_S$ be the parametrized domain of $S$ in $G'$\;
         $\P:=\emptyset, \L:=\emptyset$\;\label{l:main:gpathb} 
					\ForEach{$P_i\in \genpaths(G',S,\Omega_d)$ (in increasing order of $\dim{\kernel(P_i)})$\tikzmark{pb}}{
						\If{$\card{D_S \cap \domain{P_i}} \ge \gamma\card{D_S}$}{
               $K_i:=\kernel(P_i)$\;
               \If{$\L:=\subspaceclosure(\B,K_i)$ changed}{
								 $D_S:=D_S \cap \domain{P_i}$\;
                 $\P:=\P\cup P_i$\;
               }
            }
          }
           \lIf{$\P=\emptyset$}{\textbf{exit} while loop}\label{l:main:gpathe}
				$(\Ginterf, G')=\addparam(\Ginterf, G', \pathtoQ(\mathcal{P},D_S,\L,\Omega_d))$\tikzmark{pc}\;\label{l:main:partition}
				}
				\tikzmark{wa}$(\Ginterf, G')=\addparam(\Ginterf, G', \wavefronttoQ(S,\Omega_d))$\tikzmark{wb}\;\label{l:main:wavefront}
     }
  }
	\tikzmark{sa}$\Qlow=\textrm{input\_size}(G) + \max(0, \maxcover(\Ginterf))$\tikzmark{sb}\;\label{l:main:maxcover}
}

\Fn{\addparam}{
	\Input{set of global bounds $\Ginterf$, \dfg $G'$, parametrized bound $Q(\Omega)$}
	\Output{updated $\Ginterf$, $G'$}
	\tikzmark{l2a}
	\uIf{$\left[\Omega\neq\Omega' \Rightarrow Q.\interf(\Omega)\cap Q.\interf(\Omega')=\emptyset\right]$\tikzmark{l2b}}{\label{l:main:inter}
		$Q:=\sum_{\Omega} Q(\Omega)$\;\label{l:main:sumloop}
		$Q.\mayspill:=\bigcup_{\Omega} Q.\mayspill(\Omega)$\tikzmark{l2c}\;
		$\Ginterf = \Ginterf \cup \{Q\}$\;
    $G':=G' \setminus Q.\mayspill$\;\label{l:main:prunespill}
	}
}
\begin{tikzpicture}[overlay, remember picture,
	rel/.style={fill, draw=orange!50!black, fill=orange!20, opacity=0.3},
	rep/.style={fill, draw=blue, fill=blue!20, opacity=0.3},
	al/.style={->, thick, orange,shorten > = 4pt, shorten < = 3pt}]
	\node [above left = 5pt and 0pt of pic cs:l1a] (L1a) {};
	\node [below right = 0pt and 5pt of pic cs:l1b] (L1b) {};
	\node [above left = 5pt and 0pt of pic cs:l2a] (L2a) {};
	\node (l2b) at (pic cs:l2b) {};
	\node (l2c) at (pic cs:l2c) {};
	\node (l2d) at (l2b |- l2c) {};
	\node [below right = -3pt and 20pt of l2d] (L2b) {};
	\draw[rel] (L1a) rectangle (L1b);
	\draw[rel] (L2a) rectangle (L2b);

	\node [above right = .3cm and 1cm of L1b, orange!50!black] (Ls) {\Large Sec.~\ref{ssec:loop}};
	\draw [al] ($(L1b) + (0, 10pt)$) -- (Ls);
	\draw [al] ($(L2b) + (0, 10pt)$) -- (Ls);

	\node [above left = 5pt and 0pt of pic cs:pa] (Pa) {};
	\node (pb) at (pic cs:pb) {};
	\node (pc) at (pic cs:pc) {};
	\node (pd) at (pb |- pc) {};
	\node [below right = -4pt and 10pt of pd] (Pb) {};
	\draw[rep] (Pa) rectangle (Pb);
	\node [above right = 1.2cm and .4cm of Pb, blue] (Ps) {\Large Sec.~\ref{sec:partition}};

	\draw[rel, green!50!black, fill=green!20] ($(pic cs:wa) + (-2pt, 8pt)$) rectangle ($(pic cs:wb) + (3pt, -3pt)$);
	\node [right = 0.4cm of pic cs:wb, green!50!black] (Ws) {\Large Sec.~\ref{sec:wf}};

	\draw[rel, red!50!black, fill=red!20] ($(pic cs:sa) + (-2pt, 8pt)$) rectangle ($(pic cs:sb) + (3pt, -3pt)$);
	\node [right = 0.4cm of pic cs:sb, red!50!black] (Ss) {\Large Sec.~\ref{ssec:maxcover}};
\end{tikzpicture}

	\caption{\label{alg:mainloop} Main procedure that computes $\Qlow$ for the program by combining lower bound of sub-CDAGs obtained through $K$-partition or wavefront reasoning}
\end{algorithm}

\section{Experimental Evaluation}
\label{sec:exp}
\tool was implemented in C, using ISL-0.13~\cite{isl-manual}, 
barvinok-0.37~\cite{barvinok1994polynomial} and PET-0.05~\cite{verdoolaege2012pet}.
We also used GiNaC-1.7.4~\cite{bauer2002ginac} for the manipulation of symbolic expressions, and PIP-1.4.0~\cite{Fea88}
for linear programs.
\tool takes as input an affine C program and outputs a symbolic expression
for a lower bound on \IO complexity as a function of the problem size parameters of the
program and capacity of fast memory.

\tool was applied to all programs in the \polybench/C-4.2.1 benchmark suite~\cite{polybench}. To evaluate the quality of the results produced by \tool, we manually generate tiled versions of each kernel, then manually compute parametric data-movement costs as a function of tile sizes and cache size, then manually find the optimal tile sizes and thereby, finally, derive a manually optimized data-movement cost for this kernel. By forming the ratio of the total number of operations and the data-movement cost, we then generate $\OIopt$. In this derivation, we assume that we have explicit control of the cache. Then $\OIopt$ is compared with an operational intensity upper-bound obtained by forming the ratio of the number of operations and the data movement lower bound generated by \tool: $\OIup$. (We always must have $\OIup\geq\OIopt$.)

Let us use \texttt{jacobi-1d} as an example to illustrate all this.
\tool computes a lower bound expression $\Qlow$ on the number of loads needed for any schedule of the \texttt{jacobi-1d} kernel:
\[\Qlow  = 2+ N + \max\left(0, \frac{T N}{4\S} - N - T - \frac{1}{4}\frac{N}{\S} - \frac{3}{4} \frac{T}{\S} - \S + 5\right).\]
The first term is the input data size, and the second term is obtained by the partitioning technique.
Since the expression of $\Qlow$ can be quite large, we automatically simplify to $\Qinfty$ by only retaining the
asymptotically dominant terms, assuming all parameters $N,M\dots$ and cache size $\S$ tend to infinity, and $\S = o(N,M,\dots)$,
\[\Qinfty = \frac{T N}{4 \S} .\]
Finally, from $\Qinfty$ and the fact that the \texttt{jacobi-1d} kernel performs $6TN$ operations, we compute an upper bound for the \OI of any schedule of the \texttt{jacobi-1d} kernel,
		\[\OIup = \frac{6 T N}{\Qinfty} = 24 \S.\]

Our manually generated schedule uses a horizontal band of width $\S/2$. It has an \IO
of $\Qopt^{\infty}= 4NT/\S$, leading to $\OIopt = \frac{6 T N}{\Qopt^{\infty}} = \frac{3 \S}{2}$

In this case, $\OIup$ and $\OIopt$ are not equal. Such a gap means that it is possible to (1) either increase the data-movement lower bound ($\OIup$) generated by \tool, (2) or find a better schedule for this kernel with less data transfer so as to decrease $\OIopt$; (3) or both improvements are possible.
$\OIup$ is not necessarily tight. $\OIopt$ is not necessarily the highest achievable \OI. We only can conclude so if both quantities are equal.

\renewcommand{\arraystretch}{1.2}
\begin{table}[h]
\footnotesize
   \caption{Operational intensity upper and lower bounds for \polybench Benchmarks}

\begin{tabular}{|l||c|c|c||c|c|c|}
\hline
\textbf{kernel}           & \textbf{\# input data}    & \textbf{\#ops}       & ratio            & \textbf{$\OIup$}                        & \textbf{$\OIopt$}               & ratio                           \\ 
\hline                         
  \textsf{2mm}            & $ \phantom{+} N_i N_k + N_k N_j $ & $\phantom{+} NiN_jN_k$ & -- & $\sqrt{S}$               & $\sqrt{S}$ & $1$ $\checkmark$ \\
                          & $          +  N_j N_l + N_i N_l $ & $ + N_iN_jN_l$         &    &                          &            &                  \\
  \textsf{3mm}            & $ \phantom{+} N_i N_k + N_k N_j $ & $\phantom{+} NiN_jN_k + N_jN_lN_m$ & -- & $\sqrt{S}$               & $\sqrt{S}$ & $1$ $\checkmark$ \\
                          & $          +  N_j N_m + N_m N_l $ & $ + N_iN_jN_l$                     &    &                          &            &                  \\
  \textsf{cholesky}       & $\frac{1}{2}N^2$          & $\frac{1}{3}N^{3}$     & $\frac{2}{3}N$  & $2 \sqrt{S}$                             & $\sqrt{S}$                      & $2$                           \\
  \textsf{correlation}    & $MN$                      & $M^{2} N$              & $M$              & $2 \sqrt{S}$                            & $\sqrt{S}$                      & $2$                           \\
  \textsf{covariance}     & $MN$                      & $M^{2} N$              & $M$              & $2 \sqrt{S}$                            & $\sqrt{S}$                      & $2$                           \\
  \textsf{doitgen}        & $N_p N_q N_r$             & $2 N_q n_r N_p^{2}$    & $2N_p$           & $\sqrt{S}$                              & $\sqrt{S}$                      & $1$ $\checkmark$              \\
  \textsf{fdtd-2d}        & $3N_xN_y$                 & $11 N_xN_y T$          & $\frac{11}{3} T$ & $22 \sqrt{2} \sqrt{S}$                  & $\frac{11}{24}\sqrt{3}\sqrt{S}$ & $\frac{48\sqrt{2}}{\sqrt{3}}$ \\
  \textsf{floyd-warshall} & $N^2$                     & $2 N^{3}$              & $2N$             & $2 \sqrt{S}$                            & $\sqrt{S}$                      & $2$                           \\
  \textsf{gemm}           & $N_iN_j+N_jN_k+N_iN_k$    & $2 N_iN_jN_k$          & --               & $\sqrt{S}$                              & $\sqrt{S}$                      & $1$ $\checkmark$              \\
  \textsf{heat-3d}        & $N^3$                     & $30 N^{3} T$           & $30T$            & $\frac{160}{3\sqrt[3]{3}} \sqrt[3]{S}$  & $\frac{5}{2} \sqrt[3]{S} $      & $\frac{64}{3\sqrt[3]{3}}$     \\
  \textsf{jacobi-1d}      & $N$                       & $6 N T$                & $6T$             & $24 S$                                  & $\frac{3}{2} S$                 & $16$                          \\
  \textsf{jacobi-2d}      & $N^2$                     & $10 N^2 T $            & $10T$            & $15 \sqrt{3} \sqrt{S}$                  & $\frac{5}{4} \sqrt{S} $         & $12\sqrt{3}$                  \\
  \textsf{lu}             & $N^2$                     & $\frac{2}{3} N^{3}$    & $\frac{2}{3} N$  & $\sqrt{S}$                              & $\sqrt{S} $         & $1$ $\checkmark$                 \\
  \textsf{ludcmp}         & $N^2$                     & $\frac{2}{3} N^{3}$    & $\frac{2}{3} N$  & $\sqrt{S}$                              & $\sqrt{S} $         & $1$ $\checkmark$                 \\
  \textsf{seidel-2d}      & $ N^2 $                   & $ 9 N^{2} T $          & $9T$             & $\frac{27\sqrt{3}}{2}\sqrt{S} $         & $ \frac{9}{4} \sqrt{S} $        & $6\sqrt{3}$                   \\
  \textsf{symm}           & $\frac{1}{2}M^2+2MN$      & $2 M^2 N$              & --               & $\sqrt{S}$                              & $\sqrt{S}$                      & $1$ $\checkmark$              \\
  \textsf{syr2k}          & $\frac{1}{2}N^2+2MN$      & $2 M N^2$              & --               & $2 \sqrt{S}$                            & $\sqrt{S}$                      & $2$                           \\
  \textsf{syrk}           & $\frac{1}{2}N^2+MN$       & $M N^2$                & --               & $2 \sqrt{S}$                            & $\sqrt{S}$                      & $2$                           \\
  \textsf{trmm}           & $\frac{1}{2}M^2+MN$       & $M^2 N$                & --               & $ \sqrt{S} $                            & $\sqrt{S}$                      & $1$ $\checkmark$              \\ 
\hline        
  \textsf{atax}           & $MN$                      & $4 MN$                 & $4$              & $4$                                     & $4$                             & $1$ $\checkmark$             \\
  \textsf{bicg}           & $MN$                      & $4 M N$                & $4$              & $4$                                     & $4$                             & $1$ $\checkmark$              \\
  \textsf{deriche}        & $HW$                      & $32 H W$               & $32$             & $32$                                    & $\frac{16}{3}$                  & $6$                           \\
  \textsf{gemver}         & $N^2$                     & $10 N^{2}$             & $10$             & $10$                                    & $5$                             & $2$                           \\
  \textsf{gesummv}        & $2N^2$                    & $4 N^{2}$              & $2$              & $2$                                     & $2$                             & $1$ $\checkmark$              \\
  \textsf{mvt}            & $N^2$                     & $4N^2$                 & $4$              & $4$                                     & $4$                             & $1$ $\checkmark$              \\
  \textsf{trisolv}        & $\frac{1}{2}N^2$          & $N^2$                  & $2$              & $2$                                     & $2$                             & $1$ $\checkmark$              \\
\hline                                                                   
  \textsf{adi}            & $N^2$                     & $30 N^{2} T$           & $30T$            & $30$                                    & $5$                             & $6$                          \\
  \textsf{durbin}         & $N$                       & $2 N^{2}$              & $2N$             & $4$                                     & $\frac{2}{3}$                   & $6$                           \\
\hline                                                                    
  \textsf{gramschmidt}    & $MN$                      & $2 M N^{2}$            & $2N$             & $2 \sqrt{S}$                            & $1$                             & $2 \sqrt{S}$                  \\
  \textsf{nussinov}       & $\frac{1}{2}N^2$          & $\frac{1}{3}N^{3}$     & $\frac{2}{3}N$   & $2 \sqrt{S}$                            & $1$                             & $2\sqrt{S}$                   \\
\hline
\end{tabular}

	 \label{tab:polybench}
\end{table}

\subsection{Parametric Bounds for \OI}
\label{ssec:expe1}

Table~\ref{tab:polybench} reports, for each kernel in \polybench:
\begin{itemize}
	\item the size of the input data as well as the number of operations\footnote{\# ops are given as an indication, as some benchmarks operate on integers, and some implementations of classical linear algebra primitives in \polybench are disputable.}, and the ratio between them;
	\item the parametric upper bound on operational intensity $\OIup = \frac{\text{\# ops}}{\Qlow}$ from \tool;
	\item  the parametric lower bound $\OIopt = \frac{\# ops}{\Qopt}$ obtained by hand;
	\item the ratio $\frac{\OIup}{\OIopt}$, assessing the tightness of the bounds.
\end{itemize}

The 30 reported benchmarks can be divided into four categories, corresponding to table divisions:
\begin{enumerate}
	\item (19 kernels) The ratio $\frac{\text{\# ops}}{\text{\# input data}}$ is high, so this is an indication for potential tiling. In these cases, we manually find that tiling is actually possible. \tool gives a non-trivial \OI upper bound that is within a constant of the manually obtained \OI lower bound $\OIopt$.
		The bound is asymptotically tight for 8 of them, and within a factor of 2 for an additional 6. Except for matrix multiplication (\texttt{gemm}), where it matches the best published bound, these are all improvements over previously published results.
	\item (7 kernels) The ratio $\frac{\text{\# ops}}{\text{\# input data}}$ is a constant: clearly, these cases do not provide enough operations to enable data reuse. The reported lower bound by \tool is $\text{\# input data}$, which is asymptotically tight for 5 of them, and within a factor of 2 for 1 more.
	\item (2 kernels) The ratio $\frac{\text{\# ops}}{\text{\# input data}}$ is high which does not discard potential for tiling and high \OI. Our best manual schedule leads to a constant \OI which is arbitrarily far from this optimistic ratio. \tool proves that the code is not tileable, the best achievable \OI is a constant. \tool finds this upper bound on \OI thanks to the wavefront technique. This is better by at least a factor of $\sqrt{\S}$ than any bound that could be obtained by geometric reasoning.
	\item (2 kernels) There is an arbitrarily large discrepancy between $\OIup$ and $\OIopt$.
		Visual examination shows that, for these cases, \tool is too optimistic. These codes are actually not tileable in all dimensions, and we believe that it is possible, using more advanced techniques that are currently out of the scope of \tool, to prove a smaller matching \OI upper bound.
\end{enumerate}

\paragraph{Additional remarks}For \texttt{lu} and \texttt{floyd-warshall}, the analysis automatically decomposes the instances of a single statement into appropriate subdomains and accumulates the bounds, leading to a tighter bound than would have been obtained without such a decomposition.

For several stencil-like computations,  the bound generated by \tool is rather loose  ($16\sqrt{6}$ for \texttt{fdtd-2d}!).
This is due to two  limitations of \tool:
1.~the first (which is a theoretical limitation) is our inability to apply the technique of Sec.~\ref{sssec:sum-trick} because of a possible overlap of the ``chain'' dependencies;
2.~the second (which is an implementation limitation) is because, in the presence of several arrays, \tool only selects one per dimension, losing the opportunity to tighten the inequality constraints.

The complete symbolic expressions output by \tool are available in Appendix~\ref{appendix:full-bounds}.

\subsection{Comparison with machine balance for a specific architecture}
\label{ssec:expe2}

In order to illustrate a practical example of use of the lower bounds derived by \tool, we use the PLuTo~\cite{pluto} tiling algorithm to generate a tiled schedule, from which an idealized data-movement cost is determined using a cache simulator (Dinero~\cite{dinero}).
As opposed to $\OIup$ (derived from our tool \tool) and $\OIopt$ (derived from a manually derived schedule using an optimal cache replacement policy) that both provide a parametric operational intensity for each benchmark, the achieved \OI obtained via PLuTo ($\OIpluto$) provides a numerical operational intensity (using the \texttt{LARGE} data set) with a LRU replacement policy.
The architecture we consider in the rest of this section is a single core, with a machine balance of 8 words/cycle and a fast memory capacity of 256 kB. This more or less corresponds to L2/L3 transfers on a last-generation Intel CPU (Skyline-X), with SIMD AVX512 units.

Figure~\ref{fig:balance} reports $\OIup$ (by instantiating the parametric formula) and $\OIpluto$ for each of the \polybench kernels. 
The gaps between $\OIpluto$ and $\OIup$ that can be observed in Figure~\ref{fig:balance} come from different factors: 
1.~$\OIpluto$ uses a cache simulator, while $\OIup$ assumes an explicit (optimal) control of the cache; 
2.~$\OIpluto$ schedule space is limited due to only considering a fixed-size tiling;
3.~There already exists a gap between $\OIup$ and $\OIopt$ as reported in Table~\ref{tab:polybench}. 
For example, for \texttt{gemm}, in Figure~\ref{fig:balance}, we observe a factor of 6 between $\OIpluto$ and $\OIup$ while $\OIopt$ and $\OIup$ match in Table~\ref{tab:polybench}. 
We want to demonstrate that, despite this apparent gap between theory and practice, we can draw many useful practical conclusions from our theoretical tool, \tool.

For example, in order to know whether our application will be compute-limited or bandwidth limited on  the specific architecture, we plotted the machine balance on Figure~\ref{fig:balance}. We can then observe the three different scenarios:
\begin{enumerate}
\item (18 kernels) $\OIpluto$ is above the machine balance. Even in the cases where the upper bound $\OIup$ is quite larger than $\OIpluto$, indicating that data movement of $\OIpluto$ could potentially be reduced further, the compiler did a sufficient job in optimizing the code so that the code will be compute-bound. Performance will not be significantly affected by data movement. This is the case for example for \texttt{gemm} and \texttt{heat-3d}.
\item (6 kernels) The upper bound $\OIup$ stands below the machine balance, meaning that without a fundamental change in the design of the algorithm, it will stay bandwidth-bound. $\OIpluto$ is indeed bandwidth-bound, as any implementation has to be.
This is the case for \texttt{atax} or \texttt{trisolv} for example.
\item (6 kernels) The machine balance stands between our lower ($\OIpluto$) and upper bound ($\OIup$). 
This corresponds to the scenario where our upper bound suggests that there might be room for improvement from a performance point of view. 
This concerns for example \texttt{floyd-warshall} and \texttt{lucdmp}.
A careful look at these two cases shows that they can actually be improved: 
PLuTo did not initially handle \texttt{lucdmp} as well as it should because of the presence of some scalars that did not get expanded into arrays.
Doing this by hand allowed PLuTo to tile it and move the \OI from below the machine balance to above.
Floyd-Warshall is a more involved case: the iteration space decomposition that is discovered by \tool (similar to that of Fig.~\ref{fig:fw-2d})
gives a hint as to how to rewrite the code to make it tileable by PLuTo,
thus leading to an \OI that goes above the machine balance. This is a practical example where \tool helps us discovering more communication-efficient algorithm. (We note that our manual analysis in Section~\ref{ssec:expe1} also
uses this decomposition to derive  $\OIopt$.)
\end{enumerate}

We note that the comparison \OI / \MB is relevant for performance. If we are concerned about energy consumption, then this comparison is not relevant. Looking at Figure 7, any large gaps between 
$\OIpluto$ and $\OIup$ indicate that there might be room for reducing the data movement and thus the energy required for the computation.

\begin{figure*}
	\includegraphics[width=\textwidth]{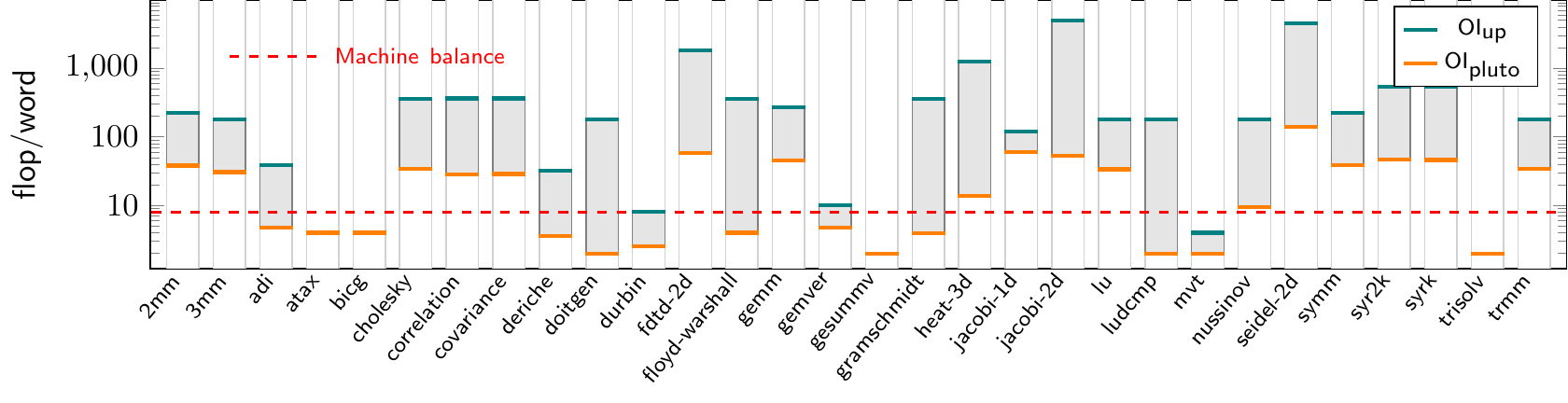}
	\caption{\label{fig:balance}Operational intensity compared to the machine balance for \polybench Benchmarks}
\end{figure*}

\section{Related Work}
\label{sec:related}
The seminal work of Hong \& Kung~\cite{hong.81.stoc} was the first to present
an approach to developing lower bounds on data movement for any valid
schedule of execution of operations in a computational DAG. Their work modeled
data movement in a two-level memory hierarchy and presented manually derived
decomposability factors (asymptotic order complexity, without scaling constants)
for a few algorithms like matrix multiplication and FFT.
Several efforts have sought to build on the fundamental lower bounding approach
devised by Hong \& Kung,
usually targeting one of two objectives:
i) generalizing the cost model to more realistic architecture hierarchies~\cite{savage.cc.95,bilardi2001characterization,bilardi2012lower}, or ii) for providing an \IO complexity with (tight) constant for some specific class of algorithms (sorting/FFT~\cite{aggarwal.ca.88,ranjan11.fft}, relaxation~\cite{ranjan12.rpyr}, or linear algebra~\cite{toledo.jpdc,BDHS11,BDHS11a,DemmelGHL12}).

In the context of linear algebra, Irony et al.~\cite{toledo.jpdc} were the first to use the Loomis-Whitney inequality~\cite{lw49} to find a lower bound on data movement. This was in the context of $\texttt{gemm}$ (one of the kernels of \polybench). The asymptotic upper bound on \OI from this paper is 
$4\sqrt{2}\sqrt{S}$. \tool returns $\sqrt{S}$. This result was then extended in~\cite{acta2014} to 6 more kernels of \polybench:
$\texttt{cholesky}$,
$\texttt{floyd-warshall}$,
$\texttt{lu}$,
$\texttt{symm}$,
$\texttt{syrk}$, and
$\texttt{trmm}$, where their upper bounds on \OI is $8\sqrt{S}$ for all of these kernels.
\tool returns $\sqrt{S}$ for 4 of these kernels, and $2 \sqrt{S}$ for the other 2. The method presented in~\cite{acta2014} is limited to a few algorithms. 
See discussion on ~\cite{Demmel2013TR} for more details on these limitations.

The studies that are the most related to this paper are those from Christ et al.~\cite{Demmel2013TR},
and Elango et al.~\cite{elango-spaa2014,elango-popl2015}. 

The idea of using a variant of the Brascamp-Lieb inequality
to derive bounds for arbitrary affine programs comes from Christ et al.~\cite{Demmel2013TR}. 
However, the approach they propose suffers from several limitations:
1.~The model is based on association of operations with data elements and does not capture data dependencies in a computational DAG. Consequently, it can lead to very weak lower bounds on data movement for computations such as Jacobi stencils.
2.~There is no way to (de-)compose the CDAG, and they view all the statements of the loop body (that has to be perfectly nested) as an atomic statement. As a consequence, it is incorrect to use this approach for loop computations where loop fission is possible.
3.~The lower bounds modeling is restricted to 2S-partitioning, leading to very weak lower bounds for algorithms such as \texttt{adi} or \texttt{durbin}.
4.~Obtaining scaling constants, in particular with non-orthogonal reuse directions, is difficult, and only asymptotic order complexity bounds can be derived.
5.~No automation of the lower bounding approach was proposed, but manually worked out examples of asymptotic complexity as a function of fast memory capacity (without scaling constants) were presented.

Elango et al.~\cite{elango-spaa2014} used a variant of the red-blue pebble game without recomputation, enabling the composition of several sub-CDAG, and the use of a lower-bounding approach based on wavefronts in the DAG.
Manual application of the approach for parallel execution was demonstrated on specific examples, but no approach to automation was proposed.

The later work of Elango et al.~\cite{elango-popl2015} was the first to make the connection between paths in the data-flow graph and regular data reuse patterns and to propose an automated compiler algorithm for affine programs.
However, their proposed approach suffers from several limitations:
1.~Only asymptotic $O(\ldots)$ data movement bounds were obtainable, without any scaling constants.
In contrast, \tool generates meaningful non-asymptotic parametric \IO lower bound formulae. From these formulae, we can derive asymptotic lower bounds with scaling constants, critical for use in deducing upper limits on \OI for a roofline model.
2.~Since they were only trying to provide asymptotic bounds without constants, they did not address (de-)composition (asymptotic bounds can be safely summed up even if they interfere).
Also, they only considered enumerative decomposition, and not dimension decomposition through loop parameterization that is necessary to obtain a tight bound for their Matmult-Seidel illustrative example.
They also only considered the simple non-overlapping notion of interference, and did not allow decomposition of the same statement, required in order to obtain a tight bound for computations like \texttt{floyd-warshall}.
3.~Finally, their approach only used the 2S-partitioning paradigm for lower bounds but not the wavefront-based paradigm, thus leading to very weak bounds for benchmarks such as \texttt{adi} or \texttt{durbin}.

\section{Conclusion}
\label{sec:conclusion}
This paper presents the first compile-time analysis tool to automatically compute a non-asymptotic parametric lower bound on the data movement complexity of an affine program. 
For a cache/scratchpad of limited size $\S$, the minimum required data movement in the two-level memory hierarchy is expressed as a function of $\S$ and program parameters. 
As a result, the tool enables, for a representative class of programs that fits in the polyhedral model, the automated derivation of a bound on the best achievable \OI for all possible valid schedule of a given algorithm.
Its effectiveness has been illustrated on a full benchmark suite of affine programs, the \polybench suite, with results matching or improving over the current state of the art for many of them.

Comparing the achievable \OI with the machine balance \MB is of particular importance as it allows to understand the minimal architectural parameters (e.g. cache size, bandwidth, frequency, etc.) required to support the required inherent data movement of a large class of algorithms such as those used in scientific computing or machine learning.
Affine program regions handled by our automated analysis covers a large proportion of codes in those areas.

\begin{acks}
	This work was supported in part by the \grantsponsor{NSF}{U.S. National Science Foundation}{} awards
	 \grantnum{NSF}{1645514}, \grantnum{NSF}{1645599}, \grantnum{NSF}{1750399} and \grantnum{NSF}{1816793}.
\end{acks}


\newpage
\appendix
\section{Full example: Cholesky decomposition}
\label{appendixA}

\tool uses two proof techniques, namely the $K$-partition and the wavefront based proofs that are respectively described in Sec.~\ref{sec:partition} and Sec.~\ref{sec:wf}. In this section, we demonstrate the complete process on a concrete example: the \texttt{cholesky} kernel. In this example, the $K$-partition method is the method of choice. (So we do not use the wavefront method.) The pseudo-code and associated \dfg for \texttt{cholesky} are reported in Fig.~\ref{fig:chol}.

The \dfg contains three statement vertices $\{S_1,S_2,S_3\}$ (the vertex corresponding to input array \texttt{A} and the corresponding dependences are omitted as they do not play a role in the lower bound derivation).
The main loop of Alg.~\ref{alg:mainloop} iterates on those statements and computes some lower bound complexities for each of them.
We here consider statement $S_3$ for which the $K$-partition reasoning is the one that leads to the largest lower bound of the three.

\begin{figure}[h]
  \begin{subfigure}{\textwidth}
\begin{lstlisting}[language=C]
for(k = 0; k < n; k++) {
	A[k][k] = sqrt(A[k][k]);                 //S1
   for(i = k+1; i < n; i++)
      A[i][k] /= A[k][k];                       // S2
   for(i = k+1; i < n; i++)
      for(j = k+1; j <= i; j++)
          A[i][j] -= A[i][k] * A[j][k];		// S3
}
\end{lstlisting}
	\caption{Source code\label{fig:code-chol}}
  \end{subfigure}

  \begin{subfigure}{\textwidth}
\centering
	\includegraphics{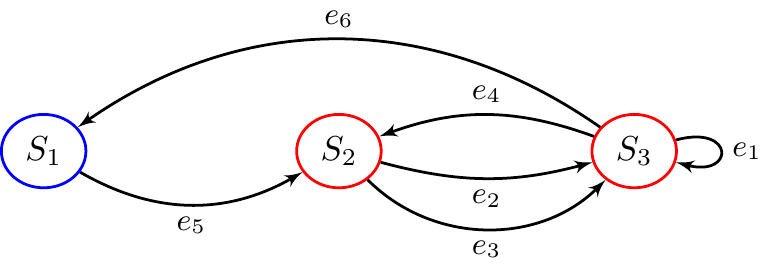}
	\begin{align*}
		R_{e_1} &= \left\{S_3[k-1,i,j] \rightarrow S_3[k,i,j] :\ \ 1\le k < N\ \ \wedge\ \ k+1\le i<N\ \ \wedge\ \ k+1\le j \le i\right\} \\
		R_{e_2} &= \left\{S_2[k,j]\rightarrow S_3[k,i,j] :\ \ 0\le k < N\ \ \wedge\ \ k+1\le i<N\ \ \wedge\ \ k+1\le j \le i\right\} \\
		R_{e_3} &= \left\{S_2[k,i]\rightarrow S_3[k,i,j] :\ \ 0\le k < N \ \ \wedge\ \ k+1 \le i<N\ \ \wedge\ \ k+1 \le j \le i\right\} \\
		R_{e_4} &= \left\{ S_3[k-1,i,k]\rightarrow S_2[k,i] :\ \ 1\le k < N \ \ \wedge\ \ k+1 \le i<N\right\} \\
		R_{e_5} &= \left\{S_1[k]\rightarrow S_2[k,i] :\ \ 0\le k < N \ \ \wedge\ \ k+1 \le i<N\right\} \\
		R_{e_6} &= \left\{ S_1[k-1, k, k]\rightarrow S_1[k]:\ \ 1\le k < N \ \ \wedge\ \ k+1 \le i<N\right\} \\
	\end{align*}
	\caption{\label{fig:dfg-chol}\dfg (input nodes are omitted)}
  \end{subfigure}
  \caption{Cholesky decomposition\label{fig:chol}}
\end{figure}

Out of the six paths, procedure \genpaths will select three ``interesting paths'' for statement $S_3$. These are the three paths pointing to $S_3$, namely:

\begin{align*}
	P_1 &= (e_1) & \textrm{is a chain path}\\
	P_2 &= (e_2) & \textrm{is a broadcast path}\\
	P_3 &= (e_3) & \textrm{is a broadcast path}\\
\end{align*}

The ``$S$-centric'' sub-CDAG is obtained by intersecting the domain of $S$ with the corresponding individual domains of the paths of interests $\P=\{P_1,P_2,P_3\}$ which are:
 \[\domain{P_1} = \left\{S_3[k,i,j]:\ \ 1\le k < N\ \ \wedge\ \ k+1\le i<N\ \ \wedge\ \ k+1\le j \le i\right\}\]
 \[\domain{P_2} = \left\{S_3[k,i,j]:\ \ 0\le k < N\ \ \wedge\ \ k+1\le i<N\ \ \wedge\ \ k+1\le j \le i\right\}\] 
 \[\domain{P_3} = \left\{S_3[k,i,j]:\ \ 0\le k < N\ \ \wedge\ \ k+1\le i<N\ \ \wedge\ \ k+1\le j \le i\right\}\]

Leading to an intersection  domain:
\begin{align*}D_S &:= \domain{P_1} \cap \domain{P_2} \cap \domain{P_3} \\
	&= \left\{S_3[k,i,j]:\ \ 1\le k < N\ \ \wedge\ \ k+1 \le i<N\ \ \wedge\ \ k+1 \le j \le i\right\}
\end{align*}

For those paths, the corresponding projections and kernels are:
\begin{align*}
\phi_1(k, i, j)& = \proj{(1, 0, 0)}{k, i, j} =  (0, i, j)& & \textrm{kernel }k_1 = \Ker(\phi_1) = \gengrp{(1, 0, 0)}\\
\phi_2(k, i, j)& = (k, j)&                                 &\textrm{kernel }k_2 = \Ker(\phi_2) = \gengrp{(0, 1, 0)}\\
\phi_3(k, i, j)& = (k, i)&                                 &\textrm{kernel }k_3 = \Ker(\phi_3) = \gengrp{(0, 0, 1)}
\end{align*}

The explicit embedding $\rho$ of the parametrized CDAG into $E$ is trivial in this case. (See Section~\ref{ssec:embedding}.)

In order to apply Theorem~\ref{thm:bl} to $E$ with $\phi_1$, $\phi_2$ and $\phi_3$, we need to find constant $s_1, s_2$ and $s_3$ such that Equation~(\ref{eq:bl1}) is true. Since the projection kernels, $\Ker(\phi_j)$, are linearly independent, in this case, there is no need to compute the generated lattice of subgroups.
Simply testing on each kernel individually for Equation~(\ref{eq:bl1}) is sufficient (see proof in \cite{Demmel2013TR}, Sec.~6.3). This leads to the following conditions on $s_1, s_2$ and $s_3$:
\begin{equation}
	\label{eq:ex-bl-precond}
	\begin{array}{l}
	0 \leq s_1, s_2, s_3 \leq 1\\
	1 \leq s_2 + s_3 \\
	1 \leq s_1 + s_3 \\
	1 \leq s_1 + s_2 
	\end{array}
\end{equation}

We can apply Theorem~\ref{thm:bl} to $E$ with $\phi_1$, $\phi_2$ and $\phi_3$
for any $s_1, s_2$ and $s_3$ satisfying Equation~(\ref{eq:ex-bl-precond})
to get Equation~(\ref{eq:bl2}). This gives that
\begin{equation}
\label{eq:meagain2}
		\card{E} \leq \card{\phi_1(E)}^{s_1}\cdot \card{\phi_2(E)}^{s_2}\cdot\card{\phi_3(E)}^{s_3}.
\end{equation}

Each of the $\card{\phi_i(E)}$ is bounded by $K$, where $K=\ST$, where $S$ is the cache size, and $T$ is the length of a segment. (See Section~\ref{ssec:embedding} and Equation~(\ref{eq:proj-bnd}): 
\begin{equation}
\label{eq:simple2}
\card{\phi_i(E)}\leq K,~i=1,2,3.
\end{equation}
 
Therefore, denoting $\sigma = s_1 + s_2 + s_3$,  we can bound $\card{E}$ with 
\begin{equation}
\label{eq:simple1}
		\card{E} \leq K^\sigma
\end{equation}

We note that we have introduced 4 parameters: $\T$, $s_1$, $s_2$, $s_3$.
We will choose $s_1$, $s_2$, $s_3$ when we minimize the upper bound $E$ for all possible values of $s_1$, $s_2$, $s_3$. We will choose $T$ when we maximize the lower bound in \IO for all possible values of $\T$. For now, we leave these variables as parameters in the reasoning.

Equation~(\ref{eq:simple1}) is a valid upper bound on the cardinality on any $\K$-bounded set $E$ in the parametrized CDAG.
This upper bound enables us to finish the reasoning to find a lower bound on \IO using the $K$-partitioning method. 

However, we are able to obtain a tighter (higher) \IO lower bound if we can find more constraining inequalities on $\card{\phi_i(E)}$ than Equation~(\ref{eq:simple2}). In order to do so, we use the ``sum-the-projections'' trick. (See Section~\ref{sssec:sum-trick}.) 

In order to use the ``sum-the-projections'' trick, we need to check the independence of the projections, so we first intersect the inverse domains of the different paths which are:
\begin{align*}
  R_{P_1}^{-1}(D) &= \left\{S_3[k,i,j]:\ \ 0\le k < N - 1\ \ \wedge\ \ k+1\le i<N - 1\ \ \wedge\ \ k+1\le j \le i\right\}\\
 R_{P_2}^{-1}(D) &= \left\{S_2[k,j]:\ \ 1\le k < N \ \ \wedge\ \ k+1\le j < N\right\}\\
 R_{P_3}^{-1}(D) &= \left\{S_2[k,i]:\ \ 1\le k < N \ \ \wedge\ \ k+1\le i < N\right\}
\end{align*}

Thus getting:
\begin{align*}
  R_{P_1}^{-1}(D) \cap R_{P_2}^{-1}(D) = \emptyset &&\Rightarrow P_1\textrm{ is independent from }P_2 \\
  R_{P_1}^{-1}(D) \cap R_{P_3}^{-1}(D) = \emptyset &&\Rightarrow P_1\textrm{ is independent from }P_3 \\
  R_{P_2}^{-1}(D) \cap R_{P_3}^{-1}(D) \ne \emptyset &&\Rightarrow P_2\textrm{ interferes with }P_3
\end{align*}

We can thus write, for every $\K$-bounded-set ($\K=\ST$) $E$ in the parametrized CDAG:

\begin{align*}
	\card{\phi_1(E)} + \card{\phi_2(E)} &\leq \K \\
	\card{\phi_1(E)} + \card{\phi_3(E)} &\leq \K
\end{align*}

And summing the two:
\begin{equation}
	\card{\phi_1(E)} + \frac{1}{2} \card{\phi_2(E)} + \frac{1}{2}\card{\phi_3(E)} \leq \K .\\
	\label{eq:ex-sum}
\end{equation}

Since Equation~(\ref{eq:ex-sum}) is more constraining on  $\card{\phi_i(E)}$ than Equation~(\ref{eq:simple2}), it enables us to obtain a tighter (higher) \IO lower bound.

In the framework of Lemma~\ref{lemma:optim}, we call $(\beta_1,\beta_2,\beta_3)=(1,1/2,1/2)$, so that Equation~(\ref{eq:ex-sum})
reads 
$$\beta_1\card{\phi_1(E)} + \beta_2\card{\phi_2(E)} + \beta_3\card{\phi_3(E)} \leq \K .$$

We now use Lemma~\ref{lemma:optim} to bound $|E|$ as follows:

\begin{equation}
	\card{E} \leq \K^{\sigma} \prod_{j=1}^m \left(\frac{s_j}{\beta_j \sigma}\right)^{s_j}.
		\label{eq:meagain}
\end{equation}

Equation~(\ref{eq:meagain}) is similar to Equation~(\ref{eq:U}). It is better than Equation~(\ref{eq:simple1}) since it is able to provide smaller upper bounds on the cardinality of $E$.

We now follow Section~\ref{ssec:lb}. The objective is, w.r.t. the constraints given in Equation~(\ref{eq:ex-bl-precond}) on $s_i$, to minimize the right-hand side of the Equation~(\ref{eq:meagain}). This leads to $s_1 = s_2 = s_3 = \frac{1}{2}$ (and $\sigma = \frac{3}{2}$), that is:
\[
	\card{E} \leq 2 \cdot(\K/3)^{3/2}.
\]

Lemma~\ref{lemma:2S} tells us that, if $U$ is an upper bound on the size of a $\ST$-bounded-set in $G$, then:
\[\Q(G) \ge \T \cdot \left\lfloor \frac{\card{V\setminus \msources{V}}}{U}\right\rfloor -\card{\msources{V}}.\]

Here $V = D\cup R_{P_1}^{-1}(D) \cup R_{P_2}^{-1}(D) \cup R_{P_3}^{-1}(D)$, giving:
\begin{align*}
	V\setminus \msources{V} = \left\{S_3[k,i,j]:\ \ 1\le k < N\ \ \wedge\ \ k+1 \le i<N\ \ \wedge\ \ k+1 \le j \le i\right\} \\
	\msources{V} = \left\{S_3[0,i,j]: 1 \le i<N\ \ \wedge\ \ 1 \le j \le i;\ \  S_2[k,i]:\ \ 1\le k < N \ \ \wedge\ \ k+1\le i < N\right\}
\end{align*}
So (omitting lower-order terms) $\card{V\setminus \msources{V}} = \frac{N^3}{6}$ and $\card{\msources{V}} = N^2$.
Taking for $U$ our upper bound on $\card{E}$ provides the following inequality for which the objective is to set a value for $\T$ that maximizes its right hand side:
\[ \Q \ge \T\times \left\lfloor\frac{N^3/6}{2\cdot (\K/3)^{3/2}}\right\rfloor - N^2 \approx \frac{\T}{\ST^{3/2}} \times \frac{N^3/6}{2 \cdot (1/3)^{3/2}}.\]

Setting $\T=2\S$ (so $\K=\S+\T=3\S$) leads to the following lower bound
\[ \Q \geq (2\S) \times \frac{N^3 / 6}{2 \S^{3/2}} = \frac{N^3}{6 \sqrt{\S}} .\]

Lower order terms have been omitted at a few places in this reasoning so this bound is asymptotic. The full expression for the lower bound found by \tool is given in Table~\ref{tab:polybench}.

\section{Full example: LU decomposition}
\label{appendixB}

Similarly to the previous section, this section illustrates the complete process on another concrete example: the LU decomposition which pseudo-code and associated DFG are reported in Fig~\ref{fig:lu}.

\begin{figure}[h]
  \begin{subfigure}{\textwidth}
\begin{lstlisting}[language=C]
for(k = 0; k < n; k++) {
   for(i = k+1; i < n; i++)
      A[i][k] /= A[k][k];			// S1
   for(i = k+1; i < n; i++)
      for(j = k+1; j < n; j++)
          A[i][j] += A[i][k] * A[k][j];		// S2
}
\end{lstlisting}
\caption{Source code\label{fig:code-lu}}
  \end{subfigure}
  
  \begin{subfigure}{\textwidth}
\centering
	\includegraphics{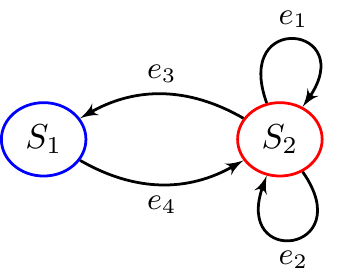}
	\begin{align*}
		R_{e_1} &= \left\{S_2[k-1,i,j] \rightarrow S_2[k,i,j]:\ \ 1\le k < N\ \ \wedge\ \ k+1\le i<N\ \ \wedge\ \ k+1\le j<N\right\} \\
		R_{e_2} &= \left\{S_2[k-1,k,j] \rightarrow S_2[k,i,j]:\ \ 1\le k < N\ \ \wedge\ \ k+1\le i<N\ \ \wedge\ \ k+1\le j<N\right\} \\
		R_{e_3} &= \left\{S_1[k,i] \rightarrow S_2[k,i,j]:\ \ 0\le k < N \ \ \wedge\ \ k+1 \le i<N\ \ \wedge\ \ k+1 \le j<N\right\} \\
		R_{e_4} &= \left\{S_2[k-1,k,k] \rightarrow S_1[k,i]:\ \ 0\le k < N \ \ \wedge\ \ k+1 \le i<N\right\} \\
	\end{align*}
	\caption{\label{fig:dfg-lu}\dfg}
  \end{subfigure}
  \caption{LU decomposition\label{fig:lu}}
\end{figure}

Here, taking $S_2$ as the destination \dfg-node, the following paths are selected:
\begin{itemize}
\item Chain $P_1$: \[R_{P_1} = \left\{S_2[k-1,i,j]\rightarrow S_2[k,i,j]:\ \ 1\le k < N\ \ \wedge\ \ k+1\le i<N\ \ \wedge\ \ k+1\le j<N\right\}\]
    \[\domain{P_1} = \left\{S_2[k,i,j]:\ \ 1\le k < N\ \ \wedge\ \ k+1\le i<N\ \ \wedge\ \ k+1\le j<N\right\}\]

	\item Broadcast $P_2$: \[R_{P_2} = \left\{S_2[k-1,k,j]\rightarrow S_2[k,i,j]:\ \ 1\le k < N\ \ \wedge\ \ k+2\le i<N\ \ \wedge\ \ k+1\le j<N\right\}\]
	  Observe that, as source and image domains have to be disjoint, $S[k, i = k+1, j]$ is excluded from image domain
          \[\domain{P_2} = \left\{S_2[k,i,j]:\ \ 1\le k < N\ \ \wedge\ \ k+2\le i<N\ \ \wedge\ \ k+1\le j<N\right\}\]

	\item Broadcast $P_3$: \[R_{P_3} = \left\{S_1[k,i]\rightarrow S_2[k,i,j]:\ \ 0\le k < N \ \ \wedge\ \ k+1 \le i<N\ \ \wedge\ \ k+1 \le j<N\right\}\]
            \[\domain{P_3} = \left\{S_2[k,i,j]:\ \ 0\le k < N\ \ \wedge\ \ k+1\le i<N\ \ \wedge\ \ k+1\le j<N\right\}\]
\end{itemize}

Intersecting the individual domains with $D_S$ leads to:
\begin{align*}D_S &= \domain{P_1} \cap \domain{P_2} \cap \domain{P_3} \\
	&= \left\{S_2[k,i,j]:\ \ 1\le k < N\ \ \wedge\ \ k+2 \le i<N\ \ \wedge\ \ k+1 \le j<N\right\}
\end{align*}

For those paths, the corresponding projections and kernels are:

\begin{align*}
\phi_1(k, i, j) &= (0, i, j)&& \textrm{kernel }k_1 = \Ker(\phi_1) = \gengrp{(1, 0, 0)}\\
\phi_2(k, i, j) &= (k, k, j)&& \textrm{kernel }k_2 = \Ker(\phi_2) = \gengrp{(0, 1, 0)}\\
\phi_3(k, i, j) &= (k, i)&& \textrm{kernel }k_3 = \Ker(\phi_3) = \gengrp{(0, 0, 1)}
\end{align*}

To check independence of projections, we intersect the inverse domains of the different paths which are:
\begin{align*}
  R_{P_1}^{-1}(D) &= \left\{S_2[k,i,j]:\ \ 0\le k < N-1\ \ \wedge\ \ k+3\le i<N\ \ \wedge\ \ k+2\le j<N\right\}\\
  R_{P_2}^{-1}(D) &= \left\{S_2[k,k+1,j]:\ \ 0\le k < N-1\ \ \wedge\ \ k+2\le j<N\right\}\\
  R_{P_3}^{-1}(D) &= \left\{S_1[k,i]:\ \ 0\le k < N\ \ \wedge\ \ k+2\le i<N\right\}
\end{align*}

They are disjoint so
\begin{equation}\label{eq:lu-tight-sum-trick}
\card{\phi_1(E)} + \card{\phi_2(E)} + \card{\phi_3(E)} \leq 3S.
\end{equation}

The rest is similar to Cholesky, we find
\[
	\card{E} \leq (3S)^{\sum_j s_j} \prod_{j=1}^m \left(\frac{s_j}{\sum_i s_i}\right)^{s_j}.
\]

The constraints on $s_j$'s are the same, so $s_1 = s_2 = s_3 = 1/2$ and
\[
	\card{E} \leq S^{3/2}.
\]

We get
\[ \Q \geq (2S) \frac{N^3 / 3}{S^{3/2}} = \frac{2 N^3}{3 \sqrt{S}}. \]

\section{Complete Lower Bound Formulae for \polybench obtained with the current version of \tool}
\label{appendix:full-bounds}

In Table~\ref{table:complete}, these are the complete formulae as produced by \tool. Next to the complete formulae are 
asymptotic formulae as presented in Table~\ref{tab:polybench}.

We  present the complete formulae produced by \tool for a few reasons. While the lower bounds on \IO obtained by \tool are lower bounds for any values of the parameters ($M$, $N$, $S$, etc), the asymptotic formulae have to be used with care. (1) The asymptotic reasoning, while providing simpler and easier to understand formulae, unfortunately removes the lower bound property. (Negligible negative terms are removed during the asymptotic reasoning.) (2) Also if the asymptotic assumptions are violated, then the asymptotic formulae becomes really off. For example if $S$ is not negligible with respect to $M$ and $N$, or if one dimension in \texttt{gemm} is small. (3) Also, the asymptotic reasonings entail some assumptions. We chose to assume all parameters $N,M\dots$ and cache size $\S$ tend to infinity, and $\S = o(N,M,\dots)$). We can imagine other reasonable asymptotic reasonings. With the complete formula, it is possible to derive them at will. (4) Showing the complete next to the asymptotic expansion explains our asymptotic assumption. (5) It can be instructive to understand the form of the complete formulae returned by \tool.

\newpage
\begin{table}
\scriptsize
\begin{tabular}{|l|l|l|}
\hline
\textbf{kernel}         & 
Complete Lower Bound Formulae for \polybench &
asymptotic simplified
\\
  & 
obtained with the current version of \tool &
 formula
\\
		\hline                                       
\textsf{2mm}            &
$
\max\left(  
N_iN_j+N_jN_k+N_iN_k+N_jN_l+2,  \right.$ &\\&$\left.
\quad\quad\phantom{+}\left( \frac{2}{\sqrt{S}} N_iN_j(N_k-1) + 2N_i + 2N_j + N_k - 4\sqrt{2}S\right)\right.$ & $\phantom{+}\frac{2}{\sqrt{S}} N_iN_jN_k$ \\
&$\left.
\quad\quad         + \left( \frac{2}{\sqrt{S}} N_iN_l(N_j-1) + 2N_i + 2N_l + N_j - 4\sqrt{2}S\right)\right.$ & $         + \frac{2}{\sqrt{S}} N_iN_lN_j$ \\
&$\left.
\quad\quad- 2N_iN_j - 2 
 \right)
$&\\
	 \hline

\textsf{3mm}            &
$
\max\left(  
N_iN_k+N_jN_k+N_jN_m+N_lN_m,\right.$ & \\&$\left.
\quad\quad\phantom{+}\left( \frac{2}{\sqrt{S}} N_iN_j(N_k-1) + 2N_i + 2N_j + N_k - 4\sqrt{2}S\right)\right.$ & $\phantom{+}\frac{2}{\sqrt{S}} N_iN_jN_k$ \\
&$\left.
\quad\quad         + \left( \frac{2}{\sqrt{S}} N_iN_l(N_j-1) + 2N_i + 2N_l + N_j - 4\sqrt{2}S\right)\right.$ & $         + \frac{2}{\sqrt{S}} N_iN_lN_j$ \\
&$\left.
\quad\quad         + \left( \frac{2}{\sqrt{S}} N_jN_l(N_m-1) + 2N_j + 2N_l + N_m - 4\sqrt{2}S\right)\right.$ & $         + \frac{2}{\sqrt{S}} N_jN_lN_m$ \\
&$\left.
\quad\quad- 2N_jN_i 
- 2N_jN_l 
- N_iN_l 
- 6 
 \right)
$ &

\\

	 \hline

\textsf{adi}            &
$ 4N^2 +
\max\left(  
0 , (N^2-4N-S+5)(T-2)
 \right)
$ &$N^2T$\\

	 \hline

\textsf{atax}            &
$ 

MN+N +
\max\left(  
0 ,  \frac18 \frac1S ( (2M-1-8S)(2N-1-8S) -1 ) - 10S + 2 
 \right)
$
& $MN$\\

	 \hline

\textsf{bicg}            &
$ 
MN+M+N
+ \max\left(  
0 ,  
\frac18 \frac1S ( (2M-1-8S)(2N-1-8S) -1 ) - 10S + 2
 \right)
$ & $MN$\\

	 \hline

\textsf{cholesky}            &
$ 
 \max\left(
\frac12  N(N+1),
 \frac16\frac{1}{\sqrt{S}}(N-1)(N-2)(N-3) 
+ \frac1{2\sqrt2}\frac{1}{S}(N-1)(N-2) \right.$ &
$
 \frac16\frac{1}{\sqrt{S}}N^3 
$ \\
&$\left.\quad\quad
- ( N-2)(N-7)
- 4\sqrt2S 
 \right)
$ &
 \\

	 \hline

\textsf{correlation}            &
$ 
 \max\left(
MN+2,
 \frac12 \frac{1}{\sqrt{S}} M(M-1)(N-1+\frac{\sqrt{2}}{2}\frac{1}{\sqrt{S}}) 
 - \frac12 ( M - 3 ) ( M + 2 N - 2 ) 
 + 2 
 - 4S\sqrt{2} \right)
$ &
$
 \frac12 \frac{1}{\sqrt{S}} M^2N
$ \\

	 \hline

\textsf{covariance}            &
$ 
 \max\left(
MN+2,
 \frac12 \frac{1}{\sqrt{S}} M(M-1)(N-1+\frac{\sqrt{2}}{2}\frac{1}{\sqrt{S}}) 
 - \frac12 ( M - 3 ) ( M + 2 N - 2 ) 
 + 1 
 - 4S\sqrt{2} \right)
$ &
$
 \frac12 \frac{1}{\sqrt{S}} M^2N
$ \\
	 \hline

\textsf{deriche}            &
$ 
 HW+1
$ & $ 
 HW
$\\
	 \hline

\textsf{doitgen}            &
$\max\left( 
N_p^2 + N_p N_q N_r,
\frac{2}{\sqrt{S}} N_q N_r N_p (N_p-1+\frac{1}{\sqrt{2}}\frac{1}{\sqrt{S}})
- N_q N_r (N_p-1) 
+ 2 N_p 
- 8\sqrt2 S 
- 1 
 \right)
$ & 
$2\frac{1}{\sqrt{S}} N_q N_r N_p^2$
\\

	 \hline
\textsf{durbin}            &
$ 2N + \max\left( 0,
\frac12 (N-3) (N-2-2S)
 \right)
$ & $\frac12 N^2$\\

	 \hline
\textsf{fdtd-2d}            &
$ \max\left( 
3 N_x N_y - N_y + T - 1 ,
\frac{1}{2\sqrt{2}}\frac{1}{\sqrt{S}} (N_x-2)(N_y-2)(T-1)
 + 2(N_x+2)(N_y+2) 
 \right.$&$ \frac{1}{2\sqrt{2}}\frac{1}{\sqrt{S}} N_xN_yT $\\
 &$\left.\quad\quad
 - T(N_x+N_y-6) 
 - N_y 
 - S 
 - 23 
 \right)
$ &

\\
	 \hline

\textsf{floyd-warshall}            &
$ \max\left( 
N^2,
 \frac{1}{\sqrt{S}}(N-1)^3 - (6N - 19)(N-2) - 8\sqrt2 S 
\right)
$&

$ 
\frac{1}{\sqrt{S}}N^3 
$ 
\\
	 \hline

\textsf{gemm} &
$\max\left( 
N_iN_j+N_jN_k+N_iN_k+2, 
\frac{2}{\sqrt{S}}N_iN_j(N_k-1) + 2N_i + 2N_j + N_k - 4\sqrt{2}S 
 \right)
$ &

$ 
2\frac{1}{\sqrt{S}}N_iN_jN_k 
$

\\

	 \hline

\textsf{gemver} &
$ 
N^2+8N+2 + \max\left(0, 
\frac14 \frac{1}{S}(3N-2)(N-8S)
- 3S 
+ 1 
 \right)
$&$N^2$\\

	 \hline

\textsf{gesummv} &
$ 
2N^2+N+2 + \max\left(0, 
\frac12 \frac{1}{S}(N-1)(N-8S) - 2S 
 \right)
$&$2N^2$\\

	 \hline

\textsf{gramschmidt} &
$ 
 \max\left(MN, 
\frac{1}{\sqrt{S}}MN(N-3)
 - M(N-5-\frac{2}{\sqrt{S}}) 
 - \frac12(N-1)(N-6)
 - 4\sqrt2 S 
 - 3
 \right)
$ &
$\frac{1}{\sqrt{S}}MN^2$

\\

	 \hline

\textsf{heat-3d} &
$ 
 \max\left( 
(N-10)(N+2)^2,
 \frac{9\sqrt[3]{3}}{16}\frac{1}{\sqrt[3]{S}}(T-1)( N-3)^3 
 - 3(T-7)( N-3 )(N-4) \right. $&
 $\frac{9\sqrt[3]{3}}{16}\frac{1}{\sqrt[3]{S}} N^3 T$\\
 &$\left.
 + 42N
 - T
 - \frac{9\sqrt[3]{3}}{4\sqrt[3]{4}} S 
 - 111 
 \right)
$
&

\\

	 \hline

\textsf{jacobi-1d} &
$ 
 \max\left( 
2+n,
  \frac14 \frac{1}{S}(T-1)(N-3) - T - S + 7 
 \right)
$ &
$   \frac14 \frac{1}{S}NT $
\\
	 \hline

\textsf{jacobi-2d} &
$ 
 \max\left(  (N-2)(N+6), 
\frac{2}{3\sqrt3} \frac{1}{\sqrt{S}} (N-3)^2(T-1)
- \frac{4\sqrt2}{3\sqrt3} S
 - (T-7)(2N-7)
 + 14 
 \right)
$ &

$ \frac{2}{3\sqrt3} \frac{1}{\sqrt{S}} N^2T $
\\

	 \hline

\textsf{lu} &

$ 
\max\left( N^2,
\frac23 \frac{1}{\sqrt{S}} (N-2)(N^2-4N+6)
 - 2( N^2 -10N + 18)
 - 8\sqrt{2} S 
 \right)
$

&
$\frac{2}{3}\frac{1}{\sqrt{S}}N^{3}$
\\

	 \hline

\textsf{ludcmp} &
$ 
 \max\left( 
 N^2+N, 
 \frac13 \frac{1}{\sqrt{S}}( 2N -3 )(N-1)(N-2) 
 \sqrt{2} \frac{1}{S}(N-1)(N-2) 
 - ( 2N^2 - 15N + 19 ) 
 - 16\sqrt{2}S 
 \right)
$&
$\frac{2}{3}\frac{1}{\sqrt{S}}N^{3}$
\\

	 \hline

\textsf{mvt} &
$ 
N^2+4N + \max\left( 0,
 \frac16 \frac{1}{S} N(N-1) - 2S - 4N + 4
 \right)
$ & $N^2$\\

	 \hline

\textsf{nussinov} &
$ 
\frac12 N^2 + \frac52 N - 1 + \max\left( 0,
 \frac16 \frac{1}{\sqrt{S}} (N-3)(N-4)(N-5) 
 + \frac14 \frac{1}{S}\sqrt{2}( 3N^2 -19N + 6 ) 
 \right.$&$\frac{1}{6}\frac{1}{\sqrt{S}}N^{3}$\\
 &$\left.
 - ( N^2 -13N + 22 )
 - 8\sqrt{2}S 
 \right)
$ 

&

\\

	 \hline

\textsf{seidel-2d} &
$\max\left( 
N^2,
\frac{2}{3\sqrt{3}}\frac{1}{\sqrt{S}} (N-3)^2 (T-1)
- ( 2N - 7 )(T -5 ) 
- \frac{4\sqrt{2}}{3\sqrt{3}}S
+ 12
 \right)
$ & $ \frac{2}{3\sqrt{3}} \frac{1}{\sqrt{S}} N^2 T $\\
	 \hline

\textsf{symm} &
$\max\left( 
\frac12 M(M+1)+2MN+2, 
   2\frac{1}{\sqrt{S}}(M-1)(M-2) N
 - \frac12 (( 4N + M )  ( M -5  )) 
 \right.$ &
 $2\frac{1}{\sqrt{S}}M^2N$ \\
 &$\left. + 5(M-2) - 8\sqrt{2}S 
 \right)
$ & \\
	 \hline

\textsf{syr2k} &
$\max\left( 
2+2MN+\frac12 N(N+1),
 \frac{1}{\sqrt{S}}(M-1)(N+1)N + M + 4N - 4\sqrt{2} S 
 \right)
$& $\frac{1}{\sqrt{S}}MN^2$\\

	 \hline

\textsf{syrk} &
$\max\left( 
MN+\frac12(N+1)N+2,
\frac12\frac{1}{\sqrt{S}}( M-1)(N+1)N - (M-4)(N-1) - 2\sqrt{2}S + 4
 \right)
$& $\frac{1}{2}\frac{1}{\sqrt{S}}MN^2$\\

\hline

\textsf{trisolv} &
$
\frac12 N (N+1) + N +
\max\left( 0,
\frac{1}{8}\frac{1}{S} ( N-1 )(N-2) 
- 2N
- S 
+ 5 
 \right)
$ & $\frac{1}{2}N^2$\\

	 \hline

\textsf{trmm} &
$\max\left( 
\frac{1}{2}M(M-1)+MN+1,
\frac{1}{\sqrt{S}} ( M - 2 + \frac{\sqrt{2}}{\sqrt{S}} ) ( M  - 1 ) N
- ( M - 4 )( N - 2 )
- 8\sqrt{2}S 
+ 5 

 \right)
$ & $\frac{1}{\sqrt{S}}M^2 N$\\
	 \hline

\end{tabular}
\caption{\label{table:complete}Complete Lower Bound Formulae for \polybench obtained with the current version of \tool}
\end{table}

\end{document}